\def\dOi{10(2:10)2014}
\theoremstyle{theorem}
\theoremstyle{definition}
\newtheorem{lemma}{Lemma}
\newcommand{\cpath}{\alpha}
\newcommand{\p}{p}
\newcommand{\x}{x}
\newcommand{\f}{\phi}
\newcommand{\ff}{\psi}
\newcommand{\fff}{\varphi}
\newcommand{\mf}[2]{\langle #1\rangle #2}
\newcommand{\muf}[2]{\mu #1.#2}
\newcommand{\cf}[2]{{#1\! #2}}
\newcommand{\m}{m}
\newcommand{\maxk}{{maxK}}
\newcommand{\subst}[2]{\left[^{#1}/_{#2}\right]}
\newcommand{\flag}[2]{#1^{#2}}
\newcommand{\flc}{\text{FL}}
\newcommand{\lean}{lean}
\newcommand{\fc}{\shortdownarrow}
\newcommand{\fs}{\shortrightarrow}
\newcommand{\ifc}{\shortuparrow}
\newcommand{\ifs}{\shortleftarrow}
\newcommand{\nod}{n}
\newcommand{\Nd}{N}
\newcommand{\dual}[1]{\overline{#1}}
\newcommand{\Nodes}{\mathcal{N}}
\newcommand{\Rel}{\mathcal{R}}
\newcommand{\Lab}{\mathcal{L}}
\newcommand{\Prop}{\mathcal{P}}
\newcommand{\M}{M}
\newcommand{\sem}[3]{[\![#1]\!]^{#2}_{#3}}
\newcommand{\semm}[3]{[\![\![#1]\!]\!]^{#2}_{#3}}
\newcommand{\nom}{o}
\newcommand{\R}{R^{FL}}
\begin{document}
\title[Global Numerical Constraints on Trees]{Global Numerical
  Constraints on Trees} 
\author[E.~B\'arcenas]{Everardo B\'arcenas\rsuper a}
\address{{\lsuper a}Universidad Polit\'ecnica de Puebla, M\'exico}
\email{ismael.barcenas@uppuebla.edu.mx}

\author[J.~Lavalle]{Jes\'us Lavalle\rsuper b}
\address{{\lsuper b}Benem\'erita Universidad Aut\'onoma de Puebla, M\'exico}
\email{jlavalle@cs.buap.mx}

\keywords{counting constraints, satisfiability, query reasoning, XML schemas}
\ACMCCS{{\bf [Theory of Computation]: }Logic---Modal and temporal logics; 
Logic---Automated reasoning;
Formal languages and automata theory---Tree languages;
{\bf [Information systems]: }Data management systems---Query languages---XML query languages---XPath}

\begin{abstract}
We introduce a logical foundation to reason on tree structures with constraints on the number of node occurrences.
 Related formalisms are limited to express occurrence constraints on particular tree regions, 
as for instance the children of a given node.
By contrast, the logic introduced in the present work can concisely express numerical bounds on any region, descendants or ancestors for instance. 
We prove that the logic is decidable in single exponential time even if the numerical constraints are in binary form.

We also illustrate the usage of the logic in the description of numerical constraints on multi-directional path queries on XML documents. 
Furthermore, numerical restrictions on regular languages (XML schemas) 
can also be concisely described by the logic.
This implies a characterization of decidable counting extensions of XPath queries 
and XML schemas.
Moreover, as the logic is closed under negation,
 it can thus be used as an optimal reasoning framework  for testing  emptiness, containment and equivalence.
\end{abstract}

\maketitle

\section{Introduction}
XML is nowadays recognized as the standard technology in the description and exchange of data in the World Wide Web.
One of the cornerstones in the XML community is XPath,
which has been well-established as the most accepted query language for XML documents (finite unranked trees). XPath takes also an important role in other XML technologies, such as
XSLT, XProc and XQuery.
The navigational core of XPath is formed by regular path queries, and its expressive power 
corresponds to the first order logic with two variables FO$^2$ \cite{Marx05}.
A regular path query selects the nodes obtained by the navigation of the path. 
Consider for instance the following query:
$\ifc^\star:a/\fc:b$.
This query expression navigates through the ancestors nodes ($\ifc^\star$) named $a$, and from there it selects the children ($\fc$) labeled with $b$.
The XPath language specification \cite{xpath} also defines arithmetical constructs on the number of node occurrences,
for example: 
$\fc:c\left[\fc^\star:\! a >\; \ifc^\star:\! b\right]$.
This query  selects the $c$ children with more descendants named $a$  than  ancestors named $b$.
However, extending regular path queries with arithmetical constructs leads to undecidability \cite{CateM09}.
Here we focus our study on numerical constraints, 
that is, restrictions with respect to constants (in binary), as for instance:
$\fc:c\left[\fc^\star:\! a >\; 5\right]$.
In this query, the selection is constrained to the $c$ children with more than $5$  descendants named $a$.
In this paper, we identify decidable extensions of XPath with numerical constraints on any regular path. 


Query reasoning in the presence of XML schemas is
one of the central issues that arises from the static analysis of XML specifications and transformations.
XML schemas are used to describe sets of trees by means of regular expressions.
Regular tree languages (types) subsume most XML schema languages used in practice, 
such as XML schema, DTDs and RelaxNG  \cite{MurataLMK05}.
Numerical constraints on regular languages are widely used in many technologies,
such as egrep \cite{Hume88}, Perl \cite{DBLP:books/daglib/0001914} and XML schema languages \cite{MurataLMK05}.
These constraints serve to bound the number of occurrences.
For instance, the regular language over $\{a,b\}$, such that $a$ occurs exactly once and $b$ occurs at least four times, can be written as follows: 
$$(abbbb^+)\mid (babbb^+) \mid (bbabb^+) \mid (bbbab^+) \mid (bbbb^+ a)$$
However, in general, hardcoding numerical constraints  produces  exponentially larger expressions than the original problem \cite{Gelade10}.  
This implies a drastic impact in the computational cost of reasoning on these kind of constraints, 
more precisely, reasoning on hardcoded numerical constraints is exponentially more expensive.
Furthermore, Gelade \cite{Gelade10} also showed that even if the numerical constraints are
directly translated to NFAs, the exponential blow-up cannot be avoided.
In this paper, we provide a way to avoid this exponential blow-up by a succinct characterization
of regular languages with numerical constraints.
More precisely, in the current work it is proposed a tree logic with counting constructs.
These constructs can restrict the number of node occurrences with respect to a constant
coded in binary. It is also shown that the proposed logic is decidable in exponential time.
Also, we show that regular tree expressions (and queries) with counting operators can  be linearly embedded by the proposed logic.


\subsection*{Motivations and Related Work}
The fully enriched $\mu$-calculus is the modal logic with inverse and graded modalities, nominals, a least and a greatest fixed-points. 
Graded modalities are used to constrain the number
of immediate successors of certain node with respect to a constant.
The fully enriched $\mu$-calculus was shown to be undecidable by Bonatti et al. \cite{BonattiLMV06}.
 Nevertheless, it has been recently shown that this result does not apply in the context of finite trees; more precisely, it was provided in \cite{BarcenasGLS11} a single exponential satisfiability algorithm for the fully enriched $\mu$-calculus for trees.
However, graded modalities (in trees)  are limited to impose numerical bounds on the number of children nodes only. 
Although, it was shown in \cite{BiancoMM10} that numerical constraints on descendant nodes
can be expressed by graded $\mu$-calculus formulas, this comes at an exponential cost 
in the formula size. This implies that, even at the logical level, hardcoding in-depth numerical constraints produces an exponential blow-up.
In contrast, we show in this paper, that
our logic can express descendant constraints without an extra cost with respect to the $\mu$-calculus.
In addition, backward constraints, such as on ancestor nodes, can also be expressed for free. 


Seidl et al. \cite{SeidlSM03} showed that the extension of monadic second order logic (MSOL) with Presburger arithmetic is undecidable.
In other works \cite{DemriL10,Dal-ZilioLM04,SeidlSMH04},
 decidable extensions of tree logics with Presburger arithmetical constraints on children are broadly studied.
Demri and Lugiez \cite{DemriL10} provide a PSPACE bound on the decidability of 
modal logic extended with Presburger constraints on children nodes.
 When proving decidability of a fragment of ambient logic,
Dal-Zilio et al. \cite{Dal-ZilioLM04} introduced a modal tree logic with Presburger arithmetic and regular constraints.
In an independent work, Seidl et al. \cite{SeidlSMH04} introduced a decidable extension 
to the logic of Dal-Zilio et al. \cite{Dal-ZilioLM04}.
The extension consists of a fixed-point operator.

In this paper we choose a different trade-off, 
we propose a tree logic with less general cardinality constraints (with respect to binary constants) 
on more extensive tree regions (descendants, ancestors, etc.).
In the same vein,
 it has been recently proposed Bianco et al. \cite{BiancoMM09} a graded version of the computation tree logic CTL .
This logic can pose constraints on the number of paths expressed by CTL formulas.
Constraints are made with respect to constants written in unary form. 
In \cite{BiancoMM10}, the same result was later extended  with constants coded in binary.
This approach however does not support backward navigation, neither in the graded formulas, nor in the non-graded ones.
One consequence is that cardinality constraints can only be expressed on downward tree regions, 
as children or descendants of a given node. 
It should also be recalled that CTL is not as expressive as MSOL. 
This implies that some regular properties, as the ones in XML schemas, cannot be expressed by CTL formulas.
Besides expressing numerical constraints on any multi-directional regular path, 
our logic is as expressive as MSOL and can concisely capture regular tree languages (XML schemas). 

The notion of global constraints has been also subject of recent study in \cite{DBLP:conf/lics/BargunoCGJV10,DBLP:journals/corr/abs-1302-6960}.
Burgo\~no et al. \cite{DBLP:conf/lics/BargunoCGJV10,DBLP:journals/corr/abs-1302-6960} introduced an automata model capable to test (dis)equality 
modulo a given flat equational theory.
 In addition, global numerical constraints (with respect to constants)
 can also be tested.
It is proven emptiness decidability without a further complexity analysis.
In this paper, besides showing decidability of a logic resulting from the addition
of global numerical constraints to a alternation-free two-way $\mu$-calculus for trees,
 we  provide an optimal satisfiability algorithm for the logic.



\subsection*{Contributions and Outline}
We introduce in Section~\ref{sec-logic} an extension of the $\mu$-calculus (for trees) with global counting constructs called $\mu$TLIN.
These constructs restrict the number of nodes (with respect to binary constants) occurring in any region of the tree models. 

In Section~\ref{sec-paths}, we describe a useful application of $\mu$TLIN in the context of XML.
It is shown that an extension of XPath with counting constructs on multi-directional regular paths can be linearly embedded by
the logic. 

Analogously as in Section~\ref{sec-paths}, we  provide in Section~\ref{sec:types} a linear embedding for regular tree languages (XML schemas) with counting constructs.

Section \ref{sec:succincteness} is about succinctness. 
It is shown that the logic with global constraints
is at least exponentially more succinct than the graded $\mu$-calculus.

Section~\ref{sec:decidability} is devoted to show that the proposed logic is decidable.
With this result we can thus use the logic as a reasoning framework for 
XPath queries with schema and counting constraints.
 However, the time complexity bound set for decidability is doubly exponential.

We improve the complexity bound for the logic  in Section~\ref{sec:sat}.
It is described 
a satisfiability algorithm for the logic, and it is  shown that the time complexity of the algorithm is single exponential. Before the description of the algorithm, we provide some
preliminaries in Section~\ref{sec-trees}.
The complexity bound for the satisfiability algorithm, together with the linear
embedding in Sections~\ref{sec-paths} and \ref{sec:types}, 
imply EXPTIME characterizations  of regular path queries (XPath) 
and regular tree languages (XML schemas) extended with global numerical constraints.
Moreover, due to the fact that reasoning on regular tree languages is in EXPTIME-complete,
the logic then represents an optimal reasoning framework for XPath queries and XML schemas with counting.

We conclude in Section~\ref{sec-conclu} with a summary of the paper and a discussion of further  research directions.

\section{A Modal Tree Logic with Global Numerical Constraints} \label{sec-logic}
We consider through the paper labeled unranked trees.
The tree logic with global numerical constraints ($\mu$TLIN) is a modal tree logic (TL) with a least fixed-point ($\mu$), inverse modalities (I), and global numerical constructs (N).
In contrast with graded modalities, where the number of nodes can be restricted only if they are immediate successors of a given node, the counting constructs in our logic can restrict the number of nodes occurring in any part of the tree model.

\subsection{Syntax and semantics}

In the context of tree models, modalities $\m$ in modal formulas are defined by $\M=\{\fc,\fs,\ifc,\ifs\}$.
$\fc$ and $\fs$ stand for the children and right sibling relations, respectively. $\ifc$ and $\ifs$ are the corresponding inverse modalities, that is, the parent and left sibling relations. For a modality $\m$, its inverse is written $\dual{\m}$.

\begin{defi}[Syntax]
We define the set of $\mu$TLIN formulas  with the following grammar:
\begin{align*}
\f :=&  \p \mid \x   \mid \neg \f  \mid \f \vee \f \mid \mf{\m}{\f} \mid \muf{\x}{\f} \mid  \cf{\f}{> k}   
\end{align*}
\end{defi}
Numerical constraints $k$ in counting formulas are assumed to be  integer numbers in binary form.
We use the following notation: 
$\f \wedge \ff$ instead of $\neg(\neg \f \vee \neg \ff)$, 
$\top$ instead of $\f\vee \neg\f$, and
$\cf{\f}{\leq k}$ instead of $\neg (\cf{\f}{> k})$.
In the sequel, we often write counting formulas $\cf{\f}{\# k}$ for $\#\in\{\leq,>\}$.
We define the size (length) of a formula $|\f|$ as usual:
$|p|=|x|=1$; $|\neg \f|= |\mf{\m}\f|=|\muf{x}\f|=1+|\f|$; 
$|\f \vee \ff|= 1+ |\f| + |\ff|$; and $|\cf{\f}{> k}|= \log{(k+1)} + |\f|$.

We consider the traditional assumption that variables can only occur in the scope of a modality or a counting operator.
In addition, we assume variables do not occur in the scope of both, a modality and its converse.
For instance, $\muf{x}{\mf{\fc}{x} \vee \mf{\ifc}{\x}}$ is not 
allowed\footnote{If variables do not occur in the scope of both, a modality and its converse, the greatest and least fixed-points coincide in the context of finite trees \cite{GenevesLS07}.}. 


In a given tree, formulas are interpreted as subsets of tree nodes. 
Propositions serve as node labels.
Negation is interpreted as set complement.
Conjunctions and disjunctions are interpreted as the intersection and union of sets, respectively.
Modal formulas $\mf{\m}{\f}$ are true in a node when there is an accessible node, through $\m$,
such that the formula $\f$ holds. 
The $\mu$ operator is interpreted as a least fixpoint.
The formula $\cf{\f}{> k}$ holds in every node of the tree model, if and only if,
$\f$ holds in {\em at least $k+1$} nodes in the {\em entire tree} (see Definition~\ref{def:sem}).



We now give a formal description of the formula semantics.
Finite tree structures are defined in the style of Kripke transition systems.
\begin{defi}[Trees]
A tree structure, or simply a tree, is a tuple $T=(\Prop,\Nodes,\Rel,\Lab)$, such that:
\begin{itemize}
\item $\Prop$ is the set of propositions;
\item $\Nodes$ is the finite set of nodes; 
 \item $\Rel$ is a transition relation $\left(\Nodes\times \M\right) \times \Nodes$ ($\M$ is the set of modalities) forming a tree structure, 
we write $\nod^\prime\in \Rel(\nod,\m)$ when $(\nod,\m,\nod^\prime)\in \Rel$;
and 
\item $\Lab$ is a left-total labeling relation on $\Nodes\times \Prop$, written $p\in \Lab(n)$.
\end{itemize}
\end{defi}

\begin{defi}[Semantics] \label{def:sem}
Given a tree $T$ and a valuation $V:Var\mapsto 2^{\Nodes}$, where $Var$ is a fixed set of variables, the formula semantics is defined as follows:
\begin{align*}
& \sem{\p}{T}{V} = \{\nod\mid\p\in\Lab(\nod)\}
&& \sem{x}{T}{V}=V(x)\\
 &\sem{\neg \f}{T}{V} = \Nodes\setminus \sem{\f}{T}{V} 
&& \sem{\f \vee \ff}{T}{V} = \sem{\f}{T}{V} \cup \sem{\ff}{T}{V} \\
& \sem{\mf{\m}{\f}}{T}{V} = \{\nod\mid \Rel(\nod,\m)\cap\sem{\f}{T}{V}\neq \emptyset\}
&& \sem{\muf{\x}{\f}}{T}{V} = \bigcap \left\{\Nodes^\prime \mid \sem{\f}{T}{V\subst{\Nodes^\prime}{x}} \subseteq \Nodes^\prime\right\} \\
& \sem{\f> k}{T}{V} = \left \{ \begin{array}{ll} \Nodes & \text{if }|\sem{\f}{T}{V}|> k \\ \emptyset &\text{otherwise} \end{array} \right.
\end{align*}
\end{defi}

\noindent If the interpretation of a formula $\f$ is not empty for a given tree $T$, i.e. $\sem{\f}{T}{V}\neq \emptyset$, 
we say the tree $T$ satisfies the formula $\f$. This is often written $T\models \f$. 
A formula is said to be satisfiable if there is a tree satisfying it. 
Two formulas $\f$ and $\ff$ are equivalent, if and only if, for every tree $T$, $T$ satisfies $\f$, if and only if, $T$ satisfies $\ff$.

\begin{exa}
We can express existential statements with counting formulas. For instance, if we want to select the nodes expressed by a formula $\ff$, only if {\em  there is a node} satisfying $\f$, then we write:
$$(\cf{\f}{> 0}) \wedge \ff$$ 
Universality can also be expressed. The following formula selects the $\ff$ nodes when {\em every node} satisfies $\f$:
$$[\cf{(\neg \f)}{\leq 0}] \wedge \ff$$ 
Note that with counting formulas it is also possible to restrict the number of nodes occurring in a particular region.
First, consider for instance the {\em descendants region}. This can be expressed as follows:
$$\muf{\x}{ \mf{\ifc}{(\p_0 \vee \x)}}$$ 
This formula denotes {\em the descendants of the $\p_0$ nodes}.
Recall that $\ifc$ denotes the parent relation. 
Hence, the formula holds in nodes from where,
by  recursive navigations through   parents,
nodes named $\p_0$ are accessible.
Then, if we  want to restrict the number of descendants of the $\p_0$ nodes in a tree, then we write:
$$\cf{[\muf{\x}{ \mf{\ifc}{(\p_0 \vee \x)}}]}{\leq 6}$$
 Now, if we want to restrict 
the number of some descendants, say descendants named $\p_1$,
then we write:
	$$\cf{([\muf{\x}{ \mf{\ifc}{(\p_0 \vee \x)} } ]\wedge \p_1)}{\leq 6}$$
Notice that $\muf{\x} \mf{\ifc}(\p_0 \vee \x) \wedge \p_1$
holds in {\em all} $\p_1$ descendants of {\em each} $\p_0$ node.
Hence, if in a model there are $2$ nodes named $\p_0$ with $2$ and $3$ descendants named $\p_1$, respectively,
then the formula $\cf{\left([\muf{\x}{ \mf{\ifc}{(\p_0 \vee \x)} \vee \mf{\ifs}{\x} }] \wedge \p_1\right)}{\leq 6}$ holds
due to all $6$ descendants of both $\p_0$ nodes (see Figure~\ref{fig:example:formulas}).
However,
one may also want to restrict the number of descendants of a particular node.
This can be done by isolating the origin node from where navigation starts (during counting).
For this purpose we first define the following formula:
$$(\cf{\nom}{\leq 1} )\wedge (\cf{\nom}{>0})$$
In this formula, proposition $\nom$ occurs exactly once in a model.
If we want to indentify where $\nom$ occurs, then we write:
$$(\cf{\nom}{= 1})\wedge \nom$$
where $\cf{\nom}{= 1}$ stands for $(\cf{\nom}{\leq 1}) \wedge (\cf{\nom}{>0})$.
Note that formula $(\cf{\nom}{= 1})\wedge \nom$ selects an node only if
the formula is true in exactly that node, then this formula can be seen as a  
nominal~\cite{BonattiLMV06}.
Now that we can isolate a single node in a model, we can thus restrict the counting
from a particular node, consider for instance the following formula:
$$\cf{[\muf{\x}{ \mf{\ifc}{([(\cf{\nom}{=1} )\wedge \nom \wedge \p_0] \vee \x)}}]}{\leq 2}$$
This formula is true in models where there is single node with no more than
$2$ descendants. If in addition, we want to name the descendants, say $p_1$,
then we write:
$$\cf{[\muf{\x}{ \mf{\ifc}{([(\cf{\nom}{=1} )\wedge \nom \wedge \p_0] \vee \x)}} \wedge \p_1]}{\leq 2}$$
A graphical respresentation of the examples above is depicted in Figure~\ref{fig:example:formulas}.


\begin{figure}[t]
\begin{center}
\begin{tikzpicture}[level/.style={sibling distance=15mm, level distance = 10mm}]
\node [circle,draw,inner sep=1pt] (z){$p_2$}
	child {node [circle,draw,inner sep=1pt] (a) {$p_0$} 
		child {node [circle,draw,inner sep=1pt] (a1) {$p_1$}	
			child [grow=up,level distance = 5mm] {node (r) {$\phi$} edge from parent[draw=none]}
		}
		child [grow=up,level distance = 5mm] {node (r) {$\nom$} edge from parent[draw=none]}
		child {node [circle,draw,inner sep=1pt] (a2) {$p_2$}
			child [grow=up,level distance = 5mm] {node (r) {$\phi$} edge from parent[draw=none]}
			child {node [circle,draw,inner sep=1pt] (a3) {$p_1$}
				child [grow=up,level distance = 5mm] {node (a4) {$\phi$} edge from parent[draw=none]}
			}
		}
	}
	child [grow=up,level distance = 5mm] {node (r) {} edge from parent[draw=none]}
	child [grow=up,level distance = 5mm] {node (r) {} edge from parent[draw=none]}
	child {node (r2) {} edge from parent[draw=none]}
	child {node [circle,draw,inner sep=1pt] (v) {$p_0$} 
		child {node [circle,draw,inner sep=1pt] (v1) {$p_1$}
			child [grow=up,level distance = 5mm] {node (r) {$\phi$} edge from parent[draw=none]}
		}
		child {node [circle,draw,inner sep=1pt] (v1) {$p_2$}
			child {node [circle,draw,inner sep=1pt] (v2) {$p_1$}
				child [grow=up,level distance = 5mm] {node (r) {$\phi$} edge from parent[draw=none]}
			}
			child [grow=right,level distance = 5mm] {node (r) {$\phi$} edge from parent[draw=none]}
		}
		child {node [circle,draw,inner sep=1pt] (v1) {$p_1$}
			child [grow=up,level distance = 5mm] {node (r) {$\phi$} edge from parent[draw=none]}
			child {node [circle,draw,inner sep=1pt] (v2) {$p_1$}
				child [grow=up,level distance = 5mm] {node (r) {$\phi$} edge from parent[draw=none]}
			}
		}
	}
;
\end{tikzpicture}
\end{center}
\caption{Tree model example: descendant region of $p_0$ nodes is denoted by the formula $\f\equiv \muf{\x}{ \mf{\ifc}{(\p_0 \vee \x)}} $; 
formula $\cf{(\phi \wedge \p_1)}{\leq 6}$ holds because 
the $\p_0$ nodes have  exactly $6$  descendants labeled with $\p_1$;
$\cf{(\muf{\x}{[(\p_0\wedge \nom \wedge \nom=1) \vee \mf{\ifc}\x]} \wedge \p_1 )}{\leq 2}$ is true
because there is a $\p_0$ node, the one marked with $\nom$,
with  $2$ descendants named $\p_1$. 
 }
\label{fig:example:formulas}
\end{figure}
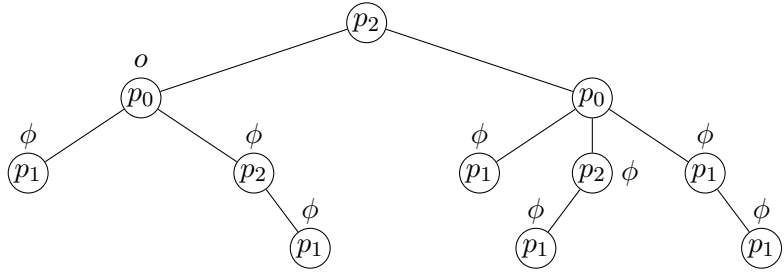
\end{exa}



\section{Counting Regular Path Queries} 
\label{sec-paths}
The navigation core of the XPath query language (for XML documents)
has been formalized as regular path queries,
and it is known to correspond to FOL$^2$ \cite{CateM09,Marx05}.
In this Section, we introduce an extension of regular path queries
with counting constructs. 
In contrast with the counting extension of regular path queries reported in \cite{BarcenasGLS11},
where counting is limited to children paths only,
the counting constructs described in this work are able to constrain arbitrary regular paths.
We also provide in this Section a linear characterization of the counting extension of regular path queries into $\mu$TLIN.


\subsection{Syntax and semantics}
We first describe the extension of regular paths with  counting constructs on multi-directional paths. 
We call this extension CPath.
\begin{defi}[Syntax]
The syntax of CPath queries $\rho$ is given as follows:
\begin{align*}
 \alpha :=& \fc \mid \fs \mid \ifc \mid \ifs \mid \fc^\star \mid \ifc^\star &
 \varrho:=& \top \mid \alpha \mid \p \mid \alpha:\p \mid \varrho/ \varrho \mid \varrho[\beta]\\
\beta := & \varrho > k  \mid \beta \vee \beta \mid \neg \beta &
\rho := & \varrho \mid /\rho \mid \rho \cup \rho \mid \rho \cap \rho \mid \rho \setminus \rho
\end{align*}
where $\p$ is a proposition, and $k$ is a positive integer in binary. 
\end{defi}
We also consider the following syntatic sugar:
$\cf{\varrho}{\leq k}$ is written instead of $\neg (\cf{\varrho}{> k})$; 
$\varrho$  instead of $\cf{\varrho}{> 0}$;
$\beta_1\wedge \beta_2$ instead of $\neg\left(\neg\beta_1 \vee \neg\beta_2\right)$;
and $\varrho[\beta_1][\beta_2]$ instead of $\varrho[\beta_1\wedge \beta_2]$.

The CPath expressions are interpreted as node-selection queries on tree structures.
In particular,
the axis relations $\alpha$ are interpreted as follows: children $\fc$, following sibling $\fs$,  parent $\ifc$, previous sibling $\ifs$,
descendants $\fc^\star$, and ancestors $\ifc^\star$. Step paths $\alpha:\p$ selects the $\p$ nodes reachable by $\alpha$.
Symbol $/$ is used to compose paths. A qualified path $\varrho[\beta]$ selects the nodes denoted by $\varrho$ that satisfies the boolean condition $\beta$. A qualified path $[\varrho > k]$ is true when $\varrho$ selects at least $k$ nodes.
The boolean combination of qualifiers $\beta$ are interpreted in the obvious manner.
The path $/\rho$ selects the nodes denoted by $\rho$ that are reachable from the root.
Union, intersection and difference of paths are interpreted as expected. 
Before given a formal description of the CPath semantics (inspired from \cite{CateM09}), 
we introduce the following notation: 
in a Kripke structure, $\nod_1\stackrel{\alpha}{\rightarrow} \nod_2$ means
than $\nod_1$ is related by means of $\alpha$ with $\nod_2$,
where $\alpha$ can be any axis relation ($\fc,\fs,\ifc,\ifs,\fc^\star,\ifc^\star$).

\begin{defi}[Semantics] \label{def:pathsemantics}
The semantics of CPath queries  is defined by a function $\sem{\cdot}{\cdot}{}$
from CPath queries with respect to a tree $T$, to pairs of nodes in $T$.
\begin{align*}
&\sem{\top}{T}{} = \Nodes\times\Nodes 
&&\sem{\p}{T}{} =\{(\nod,\nod)\mid \p\in\Lab(\nod)\}\\
&\sem{\alpha}{T}{} = \{(\nod_1,\nod_2)\mid \nod_1\stackrel{\alpha}{\rightarrow} \nod_2\}
&&\sem{\alpha:\p}{T}{} = \{(\nod_1,\nod_2)\in\sem{\alpha}{T}{}\mid \p\in\Lab(\nod_2)\}\\
&\sem{\varrho_1/\varrho_2}{T}{} = \sem{\varrho_1}{T}{}\circ\sem{\varrho_2}{T}{}
&&\sem{\varrho[\beta]}{T}{} = \{(\nod_1,\nod_2)\in\sem{\varrho}{T}{} \mid  \nod_2\in\semm{\beta}{T}{} \}\\
&\semm{\varrho > k}{T}{} = \{\nod_1\mid |\{\nod_2\mid (\nod_1,\nod_2)\in \sem{\varrho}{T}{}\}|>k\}
&&\semm{\neg\beta}{T}{} = \Nodes\setminus \semm{\beta}{T}{} \\
&\semm{\beta_1\vee \beta_2}{T}{} =  \semm{\beta_1}{T}{} \cup \semm{\beta_2}{T}{}
&&\sem{/\varrho}{T}{} = \{(r,\nod)\in \sem{\varrho}{T}{} \mid r\text{ is the root}\} \\
&\sem{\rho_1\cup\rho_2}{T}{}= \sem{\rho_1 }{T}{} \cup \sem{\rho_2}{T}{}\
&&\sem{\rho_1\cap\rho_2}{T}{}= \sem{\rho_1 }{T}{} \cap \sem{\rho_2}{T}{} \\
&\sem{\rho_1\setminus\rho_2}{T}{}= \sem{\rho_1 }{T}{} \setminus \sem{\rho_2}{T}{}
\end{align*}
Notice that the function $\semm{\cdot}{\cdot}{}$ is introduced to distinguish the interpretation of paths inside qualifiers.
\end{defi}

\begin{exa}
Consider for instance the following composition of paths:
$$\ifc^\star:p_1/\fc^\star:p_2$$
This query, evaluated from some context (a node subset), navigates to the $\p_1$ ancestors
of the context, and from there, it selects the $p_2$ descendants.
Now consider the following qualified path:
$$\ifc^\star:p_1[\fc^\star:p_2]$$
In constrast with the previous example, this query selects the $\p_1$ ancestors {\em with at least $1$ descendant named $p_2$}.
\end{exa}

\begin{prop}[Succinctness]
For any tree $T$ and CPath expression $\rho$, there is a regular path (CPath without counting) $\rho^\prime$, such that
\begin{itemize}
\item $\sem{\rho}{T}{}=\sem{\rho^\prime}{T}{}$, and
\item the size of $\rho^\prime$ is exponentially greater than the size of $\rho$.
\end{itemize}
\end{prop}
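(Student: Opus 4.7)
The plan is to proceed by structural induction on $\rho$, reducing the problem to the elimination of a single counting qualifier $\cf{\varrho}{>k}$, since every other construct of CPath is already counting-free and preserves size. It suffices to show that, for any path $\varrho$ and any $k\geq 0$, there is an equivalent counting-free qualifier of size $O\bigl((k{+}1)\cdot|\varrho|\bigr)$. Combined with the fact that $k\leq 2^{|\rho|}$ whenever $k$ is given in binary, iterating this elimination over all counting qualifiers occurring in $\rho$ would produce a counting-free expression $\rho'$ of size at most $2^{O(|\rho|)}$, as required.

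The core construction would realise ``$\varrho$ reaches more than $k$ distinct nodes from the current node'' as a chain of $k{+}1$ nested existential qualifiers, each forcing its $\varrho$-endpoint to strictly succeed the previous one in a fixed total order $\prec$ on the tree's nodes. Explicitly, I would build the qualifier
\[
\bigl[\varrho/\sigma\bigl[\varrho/\sigma[\cdots [\varrho]\cdots]\bigr]\bigr]
\]
with $k{+}1$ occurrences of $\varrho$, where the auxiliary path $\sigma$ navigates to some strict $\prec$-successor of the current node. Correctness would follow by induction on $k$ from the fact that $\prec$ is a strict total order: the $k{+}1$ witnessed $\varrho$-endpoints are then forced to be pairwise distinct, and conversely any $k{+}1$ distinct $\varrho$-endpoints can be enumerated in $\prec$-increasing order to satisfy the chain.

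The main obstacle will be exhibiting $\sigma$ within the counting-free fragment. I would take $\prec$ to be the standard document order on tree nodes ($n_1\prec n_2$ iff $n_2$ is a proper descendant of $n_1$, or $n_2$ lies in a subtree rooted at a following-sibling of some ancestor-or-self of $n_1$), and check that this relation is definable as a counting-free path using only $\fc^\star$, $\ifc^\star$ and $\fs$ together with composition and the Boolean path combinators $\cup$, $\cap$, $\setminus$. With such a $\sigma$ in hand, the chain above is well-formed and of size $O\bigl((k{+}1)\cdot|\varrho|\bigr)$, completing the elimination step. Folding this into the induction on $\rho$ yields the claimed exponential upper bound on $|\rho'|$; tightness, although not demanded by the statement, is witnessed already by the family $\cf{\p}{>k}$, since any counting-free equivalent must be able to separate trees containing exactly $k$ from trees containing exactly $k{+}1$ nodes labelled $\p$.
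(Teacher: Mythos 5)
Your overall strategy (structural induction on $\rho$, reducing everything to the elimination of a single counting qualifier) is the same as the paper's, but your core elimination step has a genuine gap. In CPath a qualifier is evaluated at the \emph{endpoint} of the path it decorates, so in your chain $\bigl[\varrho/\sigma[\varrho/\sigma[\cdots[\varrho]\cdots]]\bigr]$ the second occurrence of $\varrho$ is interpreted from the endpoint of $\sigma$, not from the original context node $n_0$. The semantics $\semm{\varrho>k}{T}{}$ counts the $\varrho$-targets \emph{of one fixed source node}; your chain instead asserts the existence of an iterated $\varrho/\sigma/\varrho/\sigma\cdots$ walk, whose later $\varrho$-endpoints need not be $\varrho$-targets of $n_0$ at all, so neither soundness nor completeness of the encoding follows. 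To repair this you would need to say, from the position of the $i$-th witness, ``there is a $\prec$-larger node that is a $\varrho$-target of $n_0$,'' which requires returning to and re-identifying $n_0$; counting-free CPath has no node variables or nominals (this is precisely why the paper's translation into $\mu$TLIN introduces the fresh proposition $\nom$), so this cannot be done for arbitrary $\varrho$. A secondary problem: your successor path $\sigma$ is built with $\cup$, $\cap$, $\setminus$, but in the CPath grammar these combinators live only at the outermost level $\rho$ and cannot occur inside a qualifier $\beta$; only disjunction of qualifiers is available there, which simulates $\cup$ but not $\cap$ or $\setminus$.

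The paper avoids both difficulties by \emph{not} attempting a uniform construction: counting is eliminated only on atomic steps $\alpha:\p>k$, and for each axis $\alpha$ the ``next witness'' path is chosen so that (i) it is a forward path from the current witness and (ii) it maps $\alpha$-targets of $n_0$ to $\alpha$-targets of $n_0$ (e.g.\ following siblings of a child of $n_0$ are again children of $n_0$). The resulting chain mentions $\alpha$ once and the successor $k$ times, so $\varrho$ is never re-evaluated from $n_0$; compound expressions are then handled by pushing the induction through $/$, $[\cdot]$ and the Boolean operations. If you restructure your argument along these lines the proof goes through, though you should note that even the paper's descendant case is delicate: ``descendant-or-following-sibling'' of a witness does not reach every later descendant of $n_0$ in document order, so the completeness direction of that base case deserves scrutiny too.
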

\begin{proof}
Given an expression $\varrho> k$ ($k>0$), we will show that
that there is an equivalent path expression $\varrho^\prime$  without counting.
We proceed by induction on the structure of $\varrho$.

For the base cases, we consider the following replacements:
\begin{align*}
  \fc:\p[\overbrace{\fs:\p[\fs:\p[\ldots]]}^{\text{$k$ times}}]
&\text{ instead of } \fc:\p>k;\\
\overbrace{\fs:\p[\fs:\p[\ldots]]}^{\text{$k+1$ times}}]
&\text{ instead of } \fs:\p>k;\\
\overbrace{\ifs:\p[\ifs:\p[\ldots]]}^{\text{$k+1$ times}}]
&\text{ instead of } \ifs:\p>k;\\
\fc^\star:p[\overbrace{\fc^\star\fs:p[\fc^\star\fs:p[\ldots]]}^\text{$k$ times}]
&\text{ instead of } \fc^\star:p>k; \text{ and}\\
\ifc^\star:p[\overbrace{\ifc^\star\fs:p[\fc^\star\fs:p[\ldots]]}^\text{$k$ times}]
&\text{ instead of } \ifc^\star:p>k;
\end{align*}
where $\fc^\star\fs:p[\beta]$ and $\ifc^\star\fs:p[\beta]$ are syntactic sugar for 
$\fc^\star:p[\beta]\; \vee \fs:p[\beta]$ and $\ifc^\star:p[\beta]\; \vee \fs:p[\beta]$,
respectively.

Consider now the case $(\varrho_1/\varrho_2)>k$, this expression
is replaced by $\varrho_1^\prime/\varrho_2^\prime$,
where by induction we know that $\varrho_1^\prime$and $\varrho_2^\prime$ 
are the counting-free  expressions equivalent to $\varrho_1$ and $\varrho_2>k$,
respectively.

Expression $\varrho_1[\varrho_2 > k_2]>k_1$ is replaced by
$\varrho_1^\prime[\varrho_2^\prime]$, such that by induction
$\varrho_1^\prime$ and $\varrho_2^\prime$ are the counting-free expressions
equivalent to $\varrho_1>k_1$ and $\varrho_2> k_2$, respectively.

Cases $\varrho[\beta_1 \vee \beta_2]$ and $\varrho[\neg\beta]$
are also immediate by induction.

In the replacement described above, notice that numerical restrictions (in binary)
are replaced by explicit path occurrences, 
it is  hence easy to see  the exponential blow-up in the size of the 
counting-free expression.
\end{proof}

\begin{defi}[Reasoning problems]
We define the emptiness, contaiment and equivalence problems of CPath queries
as follows.
\begin{itemize}
\item We say a query $\rho$ is empty, if and only if,
for every tree $T$, its interpretation is empty, that is, $\sem{\rho}{T}{}=\emptyset$;
\item It is said that a query $\rho_1$ is contained in a query $\rho_2$,
if and only if,
for every tree $T$, each pair of nodes in the interpretation of $\rho_1$ is in the intepretation
of $\rho_2$, that is, $\sem{\rho_1}{T}{}\subseteq \sem{\rho_2}{T}{}$; and
\item Two queries $\rho_1$ and $\rho_2$ are equivalent,
if and only if,
for every tree $T$, $\rho_1$ is contained in $\rho_2$ and the other way around,
that is, $\sem{\rho_1}{T}{}\subseteq \sem{\rho_2}{T}{}$ and
 $\sem{\rho_2}{T}{}\subseteq \sem{\rho_1}{T}{}$.
\end{itemize}
\end{defi}

\subsection{Logic characterization}
Regular path queries (without counting) can be written in terms of the $\mu$-calculus \cite{BarcenasGLS11}. 
For instance,  the query $\fc^\star:\p$, evaluated in the root $r$, selects the $p$ descendants of $r$.
 This can be written as follows: $$\left[\muf{\x}{\mf{\ifc}(r\vee\x)}\right] \wedge \p$$
If we want to evaluate the query in another context (node subset), represented by a  $C$ formula, then we simply replace the occurrence of $r$ by $C$. For instance,
 let us say the context represented by all the nodes named $\p_0$, then the $\p$ ancestors of $\p_0$ nodes can be written as follows:
$$\left[\muf{\x}{\mf{\ifc}(\p_0\vee\x)}\right] \wedge \p$$
In \cite{BarcenasGLS11}, it was also shown that an extension of regular path queries with counting on {\em children paths} 
can be expressed in terms of the two-way {\em graded} $\mu$-calculus. Children paths are of the  forms $\fc:\p$ and $\fc:\p[\varrho]$.
In this paper, we show that the $\mu$TLIN  counting constructs can describe more general counting constructs 
on arbitrary regular path queries, such as $\fc^\star:\p_1/\ifc^\star:\p_2[\varrho]$.

\begin{defi}[CPath queries into $\mu$TLIN formulas]
Given a context formula $C$,
the translation $F$ from CPath queries into $\mu$TLIN formulas is defined as follows:
\begin{align*}
 &F(\fc, C)  =\mf{\ifc}{C} 
&& F(\fs, C)  = \mf{\ifs}{C} \\
& F(\ifc,C)  = \mf{\fc}C 
&& F(\ifs,C)  = \mf{\fs}{C} \\
& F(\fc^\star,C) = \muf{\x}{\mf{\ifc}(C\vee\x) } 
&& F(\ifc^\star,C) =\muf{\x}{\mf{\fc}{(C \vee \x})}\\
& F(\alpha:\p,C) = F(\alpha,C) \wedge \p 
&& F(\varrho_1/\varrho_2,C) = F(\varrho_2,F(\varrho_1,C))  \\
& F(\varrho[\beta],C) =F(\varrho,C) \wedge \nom \wedge F(\beta,[\nom\!=\!1]\! \wedge\! \nom) 
&& F(\varrho> k,C) = \cf{F(\varrho, C)}{> k}  \\
& F(\neg \beta,C) = F^\prime(\beta,C) 
&& F(\beta_1\vee\beta_2,C) =F(\beta_1,C) \vee F(\beta_2,C)\\
& F(/\varrho,C) =\! \!F(\varrho,C\!\!\wedge\!\neg(\mf{\ifc}\top\!\wedge\!\mf{\ifs}\top)) 
&&  F(\rho_1\cap\rho_2,C) = F(\rho_1,C)\wedge F(\rho_2,C) \\
& F(\rho_1\cup\rho_2,C) = F(\rho_1,C)\vee F(\rho_2,C) 
&& F(\rho_1\setminus \rho_2,C) =\! F(\rho_1,C)\!\wedge\! F^\prime(\rho_2,C) 
\end{align*}
where
\begin{align*}
F^\prime(\rho) = & \begin{cases} 
		F^\prime( \varrho,C  \wedge \neg[\mf{\ifc}\top  \wedge  \mf{\ifs}\top])
		& \text{ if $\rho$ has the form $/\varrho$}, \\
	 	\neg F(\rho) & \text{ otherwise.}
		\end{cases} \\
F^\prime(\varrho) = & \begin{cases}
	\neg F(\varrho^\prime,C) \vee \left[\nom \wedge \neg F(\beta,[\nom\!=\! 1] \wedge \nom) \right] 
	& \text{ if $\varrho$ has the form $\varrho^\prime[\beta]$}, \\
	\neg F(\varrho) & \text{ otherwise.}
		\end{cases}
\end{align*}
In general $F^\prime$ represent the negation of $F$, 
however in the case where there is a counting operator,
the fresh proposition $\nom$, which serves to fix an origin node, is not negated. 
Note that the constraint $\nom\!=\!1 \wedge \nom$ is not affected by negation because it always
occur in the scope of a counting operator.
\end{defi}

\begin{exa}
Consider the following query  evaluated in a context $C$:
$$\fc:\p_1[\fc^\star:\p_2>k]$$
The query selects the $\p_1$ children of $C$ with at least $k+1$ descendants named $\p_2$.
The first part of the query $\fc:\p_1$ is translated as follows:
$$\p_1\wedge \mf{\ifc}C $$
That is, the $\p_1$ nodes with $C$ as parent.
The translation of the counting expression $\fc^\star:\p_2>k$ is
$$\cf{\nom \wedge \left[\p_2\wedge\muf{\x}\mf{\ifc}([\nom\!=\!1 \wedge \nom] \vee \x)\right]}{>k}$$
This formula holds, if and only if, there are more than $k$ descendant nodes, named $\p_2$, of a single node named $\nom$.
Then, the translation of the entire query is the following:
\begin{align*}
F(\fc:\p_1[\fc^\star:\p_2>k])=
\left(\p_1 \wedge \mf{\ifc}C   \right) \wedge 
\left( \nom \wedge 
\cf{ \left[ \p_2\wedge\muf{\x} \mf{\ifc}\left([\nom\!=\!1 \wedge \nom] \vee \x\right) \right]}>k \right)
\end{align*}
The proposition $\nom$ is used to fix a context for the counting subformula.
$\nom$ holds in a single $\p_1$ node, then the $\p_2$ descendants of that particular $\p_1$ node are the only ones counted.
\end{exa}


With the translation function $F$, we can now use the logic as
a reasoning framework to solve emptiness, containment and equivalence of 
CPath queries, moreover,
since translation $F$ does not introduce duplications,
 it is easy to see that the formula resulting from the translation  has linear size with respect  to the input query. 
\begin{thm}[Query reasoning]\label{theo-paths}
For any CPath queries $\rho,\rho_1,\rho_2$, tree $T$ and  valuation $V$, the following holds:
\begin{itemize}
\item
 $\sem{\rho}{T}{} = \emptyset \;\text{ if and only if }\;\sem{F(\rho,\top)}{T}{V} = \emptyset$;
\item
$\sem{\rho_1}{T}{} \subseteq \sem{\rho_2}{T}{} \;\text{ if and only if }\; 
  \sem{F(\rho_1,\top) \wedge F^\prime(\rho_2,\top) }{T}{V} = \emptyset$; and
\item  $F(\rho,\top)$ has linear size with respect to $\rho$ and 
$F^\prime(\rho_1,\top) \wedge F(\rho_2,\top)$ has linear size with respect to 
$\rho_1$ and $\rho_2$.
\end{itemize}
\end{thm}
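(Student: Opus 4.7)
The plan is to establish by structural induction on CPath queries a semantic correctness lemma: for every $\rho$, every $\mu$TLIN context formula $C$, tree $T$ and valuation $V$,
\[
\sem{F(\rho,C)}{T}{V} \;=\; \{n_2\in\Nodes \mid \exists n_1\in\sem{C}{T}{V},\ (n_1,n_2)\in\sem{\rho}{T}{}\},
\]
so that $F(\rho,C)$ denotes the projection onto the output coordinate of those pairs whose anchor lies in $C$. Once this is in hand, the first two bullets are immediate. Instantiating $C=\top$ gives the full output projection, which is empty iff $\sem{\rho}{T}{}=\emptyset$; and since $F(\rho_1,\top)\wedge F^\prime(\rho_2,\top)$ unfolds (by the last clause of the translation together with the definition of $F^\prime$) to $F(\rho_1\setminus\rho_2,\top)$, emptiness of the conjunction is equivalent to emptiness of $\sem{\rho_1\setminus\rho_2}{T}{}$, which is the standard reformulation of containment in terms of difference.

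Most inductive cases are routine. The axis cases match $\mu$TLIN modalities directly: $F(\fc,C)=\mf{\ifc}{C}$ picks out nodes whose parent is in $C$, which is the $\fc$-image of $C$; the symmetric readings handle $\fs,\ifc,\ifs$. For $\fc^\star$ the formula $\muf{\x}{\mf{\ifc}(C\vee\x)}$ is, by Knaster--Tarski, the smallest set closed under taking $\ifc$-predecessors and containing $\sem{C}{T}{V}$, which coincides with the $\fc^\star$-image of $C$ in a finite tree; $\ifc^\star$ is dual. Composition is chained through $F(\varrho_1/\varrho_2,C)=F(\varrho_2,F(\varrho_1,C))$ by substituting the induction hypothesis into itself, and the boolean combinators on paths together with the clauses defining $F^\prime$ on non-anchored paths reduce directly to the set-theoretic complement/union/intersection via the semantic clauses for $\vee$ and $\neg$.

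The delicate step is the qualifier case $\varrho[\beta]$ and its interaction with counting, where the translation is
\[
F(\varrho[\beta],C) = F(\varrho,C)\wedge\nom\wedge F(\beta,[\cf{\nom}{=1}]\wedge\nom).
\]
The job of the fresh proposition $\nom$ is to serve as a nominal pinning down the candidate node selected by $\varrho$ so that $\beta$ is evaluated \emph{from that node}. The correctness check requires a small auxiliary lemma stating that, because $\nom$ does not appear in $T$, satisfiability is insensitive to extending $T$ with a labeling that puts $\nom$ at a single arbitrary node; in the conservative extension where $\nom$ is placed at exactly the candidate $n_2$, the context $[\cf{\nom}{=1}]\wedge\nom$ denotes the singleton $\{n_2\}$, so by the induction hypothesis applied to $\beta$ the subformula evaluates to $\semm{\beta}{T}{}\cap\{n_2\}$. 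The crucial verification is that counting under this anchor faithfully reproduces path cardinality: $\cf{F(\varrho^\prime,[\cf{\nom}{=1}]\wedge\nom)}{>k}$ holds in the extension iff there are more than $k$ nodes $n_3$ with $(n_2,n_3)\in\sem{\varrho^\prime}{T}{}$, which is exactly $n_2\in\semm{\varrho^\prime>k}{T}{}$. This is the main obstacle, since it mixes the nominal encoding with the global counting semantics of Definition~\ref{def:sem}; it is also where the special non-negated treatment of $\nom$ in $F^\prime$ is needed, to guarantee that the anchor is reused rather than flipped when qualifiers sit under a path difference.

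For the linearity bullet, a straightforward size induction shows $|F(\rho,C)|\leq a|\rho|+|C|+b$ for fixed constants $a,b$: every constructor contributes at most constant additive overhead, composition threads $C$ through a single substitution without duplication, and the qualifier clause replaces $C$ by the constant-sized anchor $[\cf{\nom}{=1}]\wedge\nom$ on the $\beta$-side. Setting $C=\top$ yields the claimed linear bound on $F(\rho,\top)$, and an identical bookkeeping applied to $F^\prime$ (which only inserts local negations and reuses the nominal without duplicating subformulas) delivers linearity of $F^\prime(\rho_1,\top)\wedge F(\rho_2,\top)$ in $\rho_1$ and $\rho_2$.
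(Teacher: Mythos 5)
Your overall strategy coincides with the paper's: structural induction on the query, with the crux being the qualifier/counting case where the fresh proposition $\nom$ is used as a nominal to anchor the counting at a single candidate node, and the linearity bullet handled by observing that $F$ never duplicates subterms. The difference is the induction invariant you choose. The paper only ever claims satisfiability equivalences (``$\varrho$ evaluated in $C$ is satisfiable by $T$ iff $F(\varrho,C)$ is satisfiable by $T$''), whereas you promote this to an exact set equality, $\sem{F(\rho,C)}{T}{V}=\{n_2\mid\exists n_1\in\sem{C}{T}{V},\,(n_1,n_2)\in\sem{\rho}{T}{}\}$, for the \emph{given} tree $T$. That stronger invariant is cleaner for the axis, composition and boolean cases, but it is false as soon as $\rho$ contains a qualifier: $F(\varrho[\beta],C)$ mentions $\nom$, and in the original tree $T$ (where the fresh $\nom$ labels no node, or an arbitrary one) the conjunct $\nom\wedge F(\beta,[\cf{\nom}{=1}]\wedge\nom)$ denotes at most the single $\nom$-labelled node, not the full set of candidates selected by $\varrho[\beta]$. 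So the equality you rely on to conclude ``the first two bullets are immediate'' does not hold in the form stated.

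You do sense this --- your auxiliary lemma about conservatively extending $T$ by placing $\nom$ at one candidate is exactly the right repair --- but you never reconcile it with the induction hypothesis: once the invariant is relativized to $\nom$-extensions, the quantifier over placements must be threaded through composition and the boolean cases, and it flips between existential (for emptiness: \emph{some} placement witnesses a selected node) and universal (for containment: the difference formula must be empty under \emph{every} placement, which is what makes the non-negation of $[\cf{\nom}{=1}]\wedge\nom$ inside $F^\prime$ essential). The paper sidesteps this by phrasing everything as satisfiability from the outset, at the cost of being vaguer in the non-counting cases. To make your version airtight, restate the lemma as: for every $\rho$, $C$, and tree $T$, and for every node $n_2$, $n_2$ is selected by $\rho$ from some $C$-node iff there is an assignment of the fresh nominals to single nodes under which $n_2\in\sem{F(\rho,C)}{T}{V}$, and prove the containment bullet separately using the universal reading. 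Everything else in your write-up (Knaster--Tarski for the $\star$-axes, the reduction of containment to emptiness of $F(\rho_1,\top)\wedge F^\prime(\rho_2,\top)$, and the size bookkeeping) matches the paper's argument.
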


\begin{proof}
For the first item, we proceed by structural induction on $\rho$.

In order to proof the case when $\rho$ has the form $\varrho$,
we will proof the following:
 $\varrho$ evaluated in a context $C$ is satisfiable by a tree $T$, if and only if,
$F(\varrho,C)$ is satisfiable by $T$.

Consider $\rho$ is the basic query $\fc^\star:\p$, 
then $F(\fc^\star:\p,C) = \p \wedge \muf{\x} \mf{\ifc}(C \vee \x)$,
which clearly selects exactly the same nodes than $\rho$ evaluated in $C$.
The proof for the cases with the other axes ($\fc,\ifc,\fs,\ifs,\ifc^\star$) is similar. 

Now let the input query be a composition of paths, that is, $\rho$ has the form $\varrho_1/\varrho_2$. Intuitively, $\varrho_1/\varrho_2$ selects the nodes denoted by $\rho_2$ 
evaluated from the nodes satisfying $\varrho_1$, that is, $\varrho_1$ is the context.
That is precisely what it means $F(\varrho_2,F(\varrho_1,C))$.
By induction $F(\varrho_1,C)$ corresponds to $\varrho_1$, and then
also by induction $F(\varrho_2,F(\varrho_1,C))$ corresponds $\varrho_1/\varrho_2$
evaluated in $C$.

Before proving the case when the input query has the form $\varrho_1[\varrho_2 > k]$,
we need first to proof that $\varrho_2>k$ is  satisfiable by $T$,
if and only if, $ F(\varrho_2,\nom=1\wedge \nom)> k$ is  satisfiable by $T$.
 This is achieved by induction on the structure of $\varrho_2$. 
Consider $\varrho_2$ has the form $\fc:\p$. 
Then $F(\fc:\p,\top)= \p \wedge \mf{\ifc}\top$. This formula  selects all the $\p$ children  of the model.
However  according to the semantics of CPath queries (Definition~\ref{def:pathsemantics}), we need to count the $\p$ children of a single  node.
This is achieved by fixing the context with a new fresh proposition $\nom$ occurring only once in the model $\nom=1$. 
Hence $\left[\p \wedge \mf{\ifc}\left([\nom=1]\wedge \nom \right) \right] >k$
is satisfiable by $T$, if and only if, $\fc:\p>k$ is satisfiable by $T$.  
We proceed analogously for the other axes.
For the other cases of $\varrho_2$, that is, 
when $\varrho_2$ is a composition of paths 
($\varrho_2^\prime/\varrho_2^{\prime\prime}$) and a qualified path
($\varrho_2^\prime[\beta^\prime]$), the proof goes straightforward by induction.

Now that we know that $\varrho_2>k$  is satisfiable by $T$, if and only if, 
$F(\varrho_2)>k$ is satisfiable by $T$, and that by induction, 
$\varrho_1$ evaluated in $C$ is satisfiable by $T$, if and only if, $F(\varrho_1,C)$
is satisfiable $T$, we can thus infer that 
$F(\varrho_1,C) \wedge \nom \wedge F(\varrho_2,[\nom=1] \wedge \nom)$
is satisfiable by $T$, if and only if, $\varrho_1[\varrho_2>k]$ is satisfiable in context $C$ by $T$. 
Note that $\nom$ is used to select a single $\varrho_1$ node.

When $\varrho$ has the form $\varrho_1[\beta]$,  the cases when
$\beta$ is a disjunction or a negation are immediate by induction.
In the case of negation, it is important to notice that
the negation of $F(\varrho^\prime,[\nom=1]\wedge \nom)>k$ does not affect the context,
that is, negation never goes inside the formula $[\nom=1]\wedge \nom$.

Consider now the case when the input query has the form $\rho_1\setminus \rho_2$.
The only interesting case is when $\rho_2$ has the form $\varrho_1[\varrho_2 > k]$.
It is easy to see, by induction, that 
$F(\rho_1,C)$ is satisfiable by $T$, if and only if, $\rho_1$ is satisfiable by $T$.
Also by induction we also know that
$\neg F(\varrho_1,C) \vee \left(\nom\wedge \neg F(\varrho_2,[\nom=1]\wedge \nom)\right)$
is satisfiable by $T$, if and only if, $\varrho_1[\varrho_2>k]$ is not satisfiable by $T$.
We can hence conclude that 
$F(\rho_1,C) \wedge \left[\neg F(\varrho_1,C) \vee \left(\nom\wedge \neg F(\varrho_2 [\nom=1]\wedge \nom)\right)\right]$
is satisfiable by $T$, if and only if, $\rho_1\setminus(\varrho_1[\varrho_2>k])$ also does.

The cases when the input query has the forms $\rho_1\cup \rho_2$,
$\rho_1 \cap \rho_2$, and $/\rho_1$ are straightforward by induction.

For the second item, we proceed analogously as in the first item in the case
when the input query has the form $\rho_1\setminus \rho_2$.

The third item is proven immediately by structural induction on the input query
and by noticing that function $F$ does not introduce duplications.
\end{proof}

\section{Regular Tree Languages with Counting} \label{sec:types}
Regular tree language expressions (types, schemas) can be seen 
as the arborescent version of regular expressions.
These expressions are used to describe sets of trees, and
they encompass most common XML schema languages, such as
DTDs, XML schema and RelaxNG \cite{MurataLMK05}.
 Consider for instance the following expression:
$$\p_1[\p_2 ^ \star]$$
This expression is interpreted as the set of trees (XML documents) rooted by $\p_1$
with $0$ or more contiguous children named $\p_2$.

In this paper, we consider an extension of regular tree languages with counting constructs.
These constructs serve to constrain the number of children occurrence.
For example, if one wants to describe the trees rooted by $\p_1$ with at most $5$ children named $\p_2$, one may write:
$$\p_1[\p_2 ^ {\leq 5}]$$

\subsection{Syntax and semantics}
We now give a precise definition of the regular tree types with counting.
\begin{defi}[CTypes syntax]
The syntax of CTypes expressions is defined by:
\begin{align*}
&e:= \epsilon\mid \x  \mid e\cdot e\mid e+e\mid \text{let $\dual{\x}$.$\dual{e}$ in $e$}
	 \mid \p[e^{> k}] \mid \p[e^{\leq k}]
\end{align*}
\end{defi}
We often write $\p[e]$ instead of $\p[e>0]$.
Variables cannot occur free, that is, variables always occur under
the scope of a fixpoint operator.

$\epsilon$ is used for the empty tree. 
Concatenation and alternation are expressed as usual with the respective symbols $\cdot$ and $+$.
The binder is used for recursion.
The Kleene star and other common notation for regular languages are defined as follows:
$e^\star=\text{let $x.(e\cdot x)+\epsilon$ in $\x$}$, $e^+=e\cdot e^\star$, and $e^?=\epsilon+e$.
Counting expressions $\p[e^{\# k}]$ denote the set of trees rooted at $\p$ such that the number of children subtrees matching with $e$ satisfy the numerical constraint $\# k$.
In contrast with other forms of counting in regular tree languages \cite{Gelade10}, 
we do not force the counted nodes to be contiguous siblings.

\begin{defi}[CTypes semantics]
Given a valuation $V$ into trees, the interpretation of CTypes expressions
is given as follows:
\begin{align*}
& \sem{\epsilon}{}{V} = \{\emptyset\} 
&& \sem{\x}{}{V} = V(\x) \\
& \sem{e_1\cdot e_1}{}{V}= \sem{e_1}{}{V} \cdot \sem{e_2}{}{V}
&& \sem{e_1 + e_2}{}{V} = \sem{e_1}{}{V} \cup \sem{e_2}{}{V}\\
& \sem{\text{let $\dual{\x}.\dual{e}$ in $e$}}{}{V} = \sem{e}{}{\text{lfp($V$)}}
&& \sem{\p[e^{\# k}]}{}{V} = \{T \mid \text{the root of $T$ is labeled by $\p$ and the number}\\ 
&&& \;\;\;\;\; \;\;\;\;\;\;\;\;\;\;\;\;\;\;\;    \text{of children subtrees in $\sem{e}{}{T}$  satisfies \#k} \}
\end{align*}
where lfp($f$) is the least fixpoint of $f$ defined 
$\text{lfp}(V^\prime)=V\left[^{\dual{x}}/_{ \dual{ \sem{e}{}{V^\prime} } }\right]$.
Note that $V$ is monotone according to subset ordering,
 hence it always has a fixpoint due to the Fixpoint Theorem \cite{Tarski55}.
\end{defi}

 It was shown in \cite{BarcenasGLS11} that any CTypes expression can be written in terms
of $\mu$-calculus formulae. 
We also know that the graded $\mu$-calculus is as expressive as
the plain $\mu$-calculus \cite{barcenasthesis}. 
It is also well-known that the plain $\mu$-calculus and regular tree languages (types)
are equally expressive \cite{DBLP:conf/concur/JaninW96}.
It is then easy to see that counting operators (CTypes) do not introduce more
expressive power in regular tree languages. 
Also, by  Theorem~\ref{theo:expressiveness},
we can  conclude that 
$\mu$TLIN and CTypes are equally expressive.

\subsection{Logic characterization}

CTypes without counting can be linearly characterized by the simple $\mu$-calculus \cite{BarcenasGLS11}. Moreover, in the same work it is also shown that the counting constructs of CTypes can be captured by the graded $\mu$-calculus.
We now show that $\mu$TLIN can also capture CTypes expression and hence be used as a reasoning framework.
For instance, the above example $\p_1[\p_2^\star]$ can be expressed as follows:
$$\p_1 \wedge \left( \neg\mf{\fc}\top \vee 
 \mf{\fc}\left[\neg\mf{\ifs}\top \wedge \muf{\x}\p_2 \wedge (\mf{\fs}\x \vee \neg \mf{\fs}\top) \right]\right) $$

We now give a general translation function.
\begin{defi}[CTypes expressions into $\mu$TLIN formulas]
The translation function $F$ from CTypes expressions to $\mu$TLIN formulas
is given as follows:
\begin{align*}
& F(\epsilon) = \neg \top
&& F(e_1+e_2)= F(e_1) \cup F(e_2) \\
& F(e_1\cdot e_2) = F(e_1) \wedge \mf{\fs} F(e_2)
&& F(\text{let $\dual{x}.\dual{e}$ in e}) = \muf{\dual{x}}\dual{F(e)}\text{ in }F(e) \\
\end{align*}
\vspace{-1cm}
\begin{align*}
& F(\p[e^{\# k}])= \p \wedge \nom \wedge 
\cf{\left(F(e)\wedge \mf{\ifc}\left[\nom\!=\! 1 \wedge \nom\right] \right)}{\# k}
\end{align*}
\end{defi}

Formula $\muf{\dual{x}}\dual{\phi} \text{ in }\phi$ is a generalization of
the least fixpoint. 
Its formal semantics is defined as follows:
$$ \sem{\muf{\dual{x}}\dual{\phi} \text{ in }\phi}{T}{V}= 
\sem{\phi}{T}{V\subst{\dual{N^{\prime\prime}}}{\dual{\x}}},
\text{ where }
\dual{N^{\prime\prime}}=\bigcap \left\{\dual{N^\prime} \mid \dual{\sem{\phi}{T}{V\subst{\dual{N^\prime}}{\dual{x}}}} \subseteq \dual{N^\prime} \right\}.$$
Note that this generalization does not provide more expressive power
and it is only used for a succinct translation of his analogous operator in CTypes expressions.


Now consider an example for the translation function.
The expression above $\p_1[\p_2^\leq 5]$ is  translated as follows:
$$F(\p_1[\p_2^{\leq k}])= \; \p_1 \wedge \nom \wedge \cf{\left(\p_2\wedge \mf{\ifc}\left[\nom\!=\!1 \wedge \nom\right] \right)}{\leq 5} $$

Notice that the fresh proposition $\nom$ is used to count from a fixed context
in an analogous manner as done for regular path queries.
It is then necessary to define a safe negation for the translation $F$ in order
to properly model the containment and equivalence of CTypes expressions.
Safe negation of $F$ is defined by $F^\prime$ as follows.
\begin{defi}
We define the following translation function from CTypes expressions into
$\mu$TLIN formulas.
$$
F^\prime(e) = \begin{cases}
				\neg \p \vee \left(\nom \wedge 
				\neg \left[\cf{\left(F(e_0)\wedge \muf{\x}\mf{\ifc}\left[\nom\!=\! 1 
				\wedge \nom\right] \vee \mf{\ifs}\x\right)}{\# k} \right]\right)
           			& \text{ if $e$ has the form }\p[e_0^{\# k}],\\
				\neg F(e) & \text{ otherwise.}
			\end{cases}
$$
\end{defi}

We can now define the reasoning problems of CTypes expressions in terms of
$\mu$TLIN formulas.
\begin{thm}[CTypes reasoning]
\label{theo-types}
For any CTypes expressions $e$, $e_1$ and $e_2$, tree $T$ and valuation $V$,
we have that:
\begin{itemize}
\item $\sem{e}{}{V} = \emptyset$, if and only if, 
$\sem{F(e)}{T}{V} = \emptyset$;
\item $\sem{e_1}{}{V} \subseteq \sem{e_2}{}{V}$, if and only if,
$\sem{F(e_1) \wedge  F^\prime(e_2)}{T}{V} = \emptyset$; and
\item $F(e)$, $F(e_1)$ and $F^\prime(e_2)$ have linear size with respect to $e$, $e_1$
and $e_2$, respectively.
\end{itemize}
\end{thm}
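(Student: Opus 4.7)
The plan is to proceed by structural induction on the CTypes expression, mirroring the strategy used in the proof of Theorem~\ref{theo-paths}. The induction invariant I would maintain is: for any subexpression $e$ and any tree $T$, the root of $T$ satisfies $F(e)$ (under the corresponding $\mu$TLIN valuation obtained by interpreting each CTypes variable $\x$ as the set of first-sibling nodes whose forward-sibling sequence is in $V(\x)$) if and only if $T \in \sem{e}{}{V}$. Since CTypes describes sequences of sibling subtrees, the invariant must be stated at the first sibling of the sequence, and the overall equivalence $\sem{e}{}{V}=\emptyset \Leftrightarrow \sem{F(e)}{T}{V}=\emptyset$ follows once the invariant is in place.

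For the base cases, $F(\epsilon)=\neg\top$ corresponds to no match, and variables are routine. For $e_1\cdot e_2$, the formula $F(e_1)\wedge\mf{\fs}F(e_2)$ captures that the current node satisfies $e_1$ and its immediate right sibling begins a sequence for $e_2$. Alternation is direct. For the binder, the generalized $\mu$ operator was defined precisely to match the CTypes least fixpoint, so the equivalence follows from Knaster--Tarski. The delicate case is the counting construct $\p[e^{\#k}]$: the translation marks the root with the fresh proposition $\nom$ restricted by $\cf{\nom}{=1}$, and then counts the children $F(e)\wedge\mf{\ifc}[\nom\!=\!1\wedge\nom]$. The supporting claim I would establish separately is that for a fixed $\nom$-labeled node $u$, the set counted by this subformula is exactly the set of first-sibling children of $u$ whose sibling sequence matches $e$, so the $\#k$ comparison aligns with the CTypes semantics.

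For the containment item, the argument parallels the $\rho_1\setminus\rho_2$ case of Theorem~\ref{theo-paths}: $F^\prime$ is ordinary negation except at top-level counting expressions, where the $\nom$ and $\cf{\nom}{=1}\wedge\nom$ machinery must remain positive, since flipping it would identify a different context rather than negate the counting condition. Combined with the first item, this yields $T\in\sem{e_1}{}{V}\setminus\sem{e_2}{}{V}$ iff $T$ satisfies $F(e_1)\wedge F^\prime(e_2)$. I expect the main obstacle to be bookkeeping around the nominal: one must verify that the context fragment $\cf{\nom}{=1}\wedge\nom$ never appears under negation in composed translations, which I would handle by pushing $F^\prime$ only at the outermost counting operator and checking that negation distributes correctly through concatenation, alternation, and the binder via the standard $\mu$-calculus dualities. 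The linear size claim is then immediate by inspection of $F$ and $F^\prime$, since no defining clause duplicates its input subexpressions.
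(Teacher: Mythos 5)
Your proposal follows essentially the same route as the paper: structural induction on the CTypes expression in analogy with the proof of Theorem~\ref{theo-paths}, with the counting construct $\p[e^{\# k}]$ as the key case, handled by the nominal $\cf{\nom}{=1}\wedge\nom$ fixing the counting context, and with $F^\prime$ kept positive on the nominal machinery for the containment item. If anything, your sketch is more explicit than the paper's (which only works out the $\p[e^{\# k}]$ case) about the sibling-sequence anchoring of the invariant and the treatment of valuations, but the underlying argument is the same.
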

\begin{proof}
The proof goes by structural induction on the input CTypes expressions
in an analogous manner as the proof of Theorem~\ref{theo-paths}.
We will only show the case when the CTypes expression
has the form $\p[e^{\# k}]$ for the first item.
By induction we know $F(e)$ is satisfiable by a tree $T$, if and only if, $e$ is satisfiable.
Then the formula $[F(e) \wedge \mf{\ifc}(\nom=1\wedge \nom) ]\#k$
is satisfiable by $T$, if and only if, there is a node with children matching $F(e)$
and satisfying the numerical constraint $\# k$.
Therefore $\p \wedge \nom \wedge [F(e) \wedge \mf{\ifc}(\nom=1\wedge \nom) ]\#k$
is satisfiable by $T$, if and only if, $\p[e^{\#k}]$ is satisfiable by $T$.
\end{proof}

\section{Succincteness} \label{sec:succincteness}
We show in this Section that $\mu$TLIN is at least exponentially more
succinct that the graded $\mu$-calculus \cite{BonattiLMV06}.
This is done via a GCTL embedding. 
We know from Bianco et al. \cite{BiancoMM10,BiancoMM12} that
the Graded Computation Tree Logic (GCTL) 
is at least exponentially more succinct than the graded $\mu$-calculus. 
We then describe  a linear embedding of
GCTL into $\mu$TLIN. 
A precise definition of GCTL formulas is  first given.
\begin{defi}[Syntax]
The set of Graded Computation Tree Logic formulas is inductively defined
by the following grammar.
\[
\f:= \p \mid \neg \f \mid \f \vee \f \mid E^{>k}X \f \mid E^{>k}G \f \mid E^{>k} \f U \f
\]
\end{defi}

Formulas are also interpreted as node subsets of  finite tree structures.
Proposition are also used as node labels, 
and the boolean operators are interpreted as expected.
Formula $E^{>k} X \f$ is true in nodes with more than $k$ children where $\f$ holds.
$E^{>k} G \f$ holds in nodes with  more than $k$ downward paths
leading to a leaf, such that $\f$ is true in each path node.
And formula $E^{>k} \f U \ff$ holds in nodes $n_0$ with more than
$k$ downward paths $n_0,\ldots,n_k$, such that $\ff$ holds in $n_k$
and $\f$ is true in $n_i$ for every $i<k$.

The {\em all but graded operator} $A^{\leq k}$  is defined as follows:
\begin{align*}
& A^{\leq k}X\f \equiv \neg E^{> k}X \neg \f,
&& A^{\leq k}G \f \equiv \neg E^{>k}F \neg \f,\\
& E^{>k}F \f \equiv E^{>k} \top U \f,
&& A^{\leq k}\f U \ff \equiv \bigvee_{k_1+k_2=k}
\neg\left( E^{>k_1}\left[ \neg \ff U (\neg \f \wedge \neg\ff)\right] \vee E^{>k_2}G \neg \ff\right).
\end{align*}
$A^{\leq k}X \f$ selects nodes with  at most $k$ children where $\f$ does not hold;
$A^{\leq k}G \f$ restricts to at most $k$ the number of downward paths leading to a leaf,
 such that $\f$ does  not hold in each path node;
$E^{>k} F \f$ counts at least $k$ paths where $\f$ holds at least once; and
$A^{\leq k} \f U \ff$ constrains to at most $k$ the number of downward paths
such that the following does not hold: 
$\f$ and $\neg \ff$ are true and in each path node except the last one
where $\ff$ is true.

In order to give a precise GCTL semantics we first describe some useful notations about
downward paths.
\begin{defi}[Children path] \label{def:childpath}
Given a tree structure $T$, 
a children path $\alpha^{\nod_k}_{\nod_0}$ starting at node $\nod_0$ and ending at node $\nod_k$ 
 is a finite set of nodes  $\left\{ n_0,n_1,\ldots,n_k\right\}$, such that
$n_{i+1}\in \Rel(n_i,\fc)$ for $i=0,\ldots,k$.
If the ending node $n_k$ is a leaf, that is, $\Rel(\nod_k,\fc)=\emptyset$,
then we may avoid to write the ending node  $\alpha_{\nod_0}$.
If the starting and the ending node is the same, then the path is defined as the singleton 
$\alpha^n_n=\{n\}$.
\end{defi}

\begin{defi}[Semantics]
Given a tree structure $T$, the interpretation of GCTL formulas is given as follows.
\begin{align*}
 \sem{\p}{T}{}=& \left\{\nod \in \Lab(\p)\right\} \\
 \sem{\neg \f}{T}{} =&  \Nodes \setminus \sem{\f}{T}{}\\
 \sem{\f \vee \ff}{T}{} =&  \sem{\f}{T}{V} \cup \sem{\ff}{T}{V} \\
 \sem{E^{>k}X \f}{T}{} = & \left\{\nod \mid |\Rel(\nod,\fc) \cap \sem{\f}{T}{} |>k\right\}\\
 \sem{E^{>k}G \f}{T}{}=&\left\{\nod \mid 
\left|\left\{ \cpath_\nod  \mid \cpath_\nod \subseteq \sem{\f}{T}{}\right\}\right|>k \right\} \\
 \sem{E^{>k} \f U \ff}{T}{}\ =&  
\left\{\nod_0 \mid  \left| \left\{ \alpha_{\nod_0}^{\nod_k}\neq\emptyset \mid
  \nod_k\in \sem{\ff}{T}{}, \alpha_{\nod_0}^{\nod_{k-1}}\subseteq\sem{\f}{T}{} \right\}\right|>k \right\}
\end{align*}
\end{defi}

As expected, GCTL formulas can be described in terms of $\mu$TLIN formulas.
We now give a precise definition of this embedding.
\begin{defi}[GCTL embedding]
The function $F$ from GCTL formulas to $\mu$TLIN formulas is defined as follows:
\begin{align*}
& F(\p) = \p
&& F(\neg \f) = \neg F(\f) \\
& F(\f \vee \ff)= F(\f) \vee F(\ff)
&& F(E^{>k}X \f)= \nom \wedge  \cf{\left(F(\f)\wedge 
 \mf{\ifc}\left[\nom\wedge \nom=1\right] \right)}>k
\end{align*}
\vspace{-0.6cm}
\begin{align*}
 F(E^{>k}G \f)=& \nom \wedge \cf{\left( \neg\mf{\fc}\top \wedge 
		\muf{\x}F(\f) \wedge \left[ \mf{\ifc}\x \vee \left(\nom\wedge \nom=1\right) \right] \right)}>k \\
 F(E^{>k}\f U \ff)=& \nom \wedge \cf{\left(\ff \wedge \left[ \nom \vee 
	\mf{\ifc}\muf{\x}F(\f) \wedge \left(\mf{\ifc}\x \vee \left[\nom\wedge \nom=1\right]\right)\right]\right)}>k
\end{align*}
\end{defi}

\begin{thm}[Embedding]
For any $GCTL$ formula $\f$,  tree $T$ and  valuation $V$, we have that:
$$ \sem{\f}{T}{}\neq\emptyset \text{ if and only if } \sem{F(\f)}{T}{V}\neq \emptyset$$
and $F(\f)$ has linear size with respect to $\f$.
\end{thm}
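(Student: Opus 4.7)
The plan is to proceed by structural induction on $\f$. The base case $\f=\p$ is immediate since $F(\p)=\p$ and both semantics reduce to label membership; the Boolean cases follow directly from the inductive hypothesis together with the corresponding clauses in Definition~\ref{def:sem}. The linear-size bound is equally immediate from inspection: each GCTL connective is translated into a bounded-size skeleton around recursive calls, with no duplication and with $\nom\!=\!1\wedge\nom$ of fixed size.

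The substantive content is the three graded cases, all of which follow a common template. First I would observe that the fresh proposition $\nom$, coupled with $\nom\!=\!1\wedge\nom$, acts as a nominal that pins down a single candidate evaluation node; I will call this the anchor. The outer counting operator ranges globally, but its body is constrained, either by $\mf{\ifc}[\nom\!=\!1\wedge\nom]$ or by an analogous ancestor-chain fixpoint, so that only descendants of the anchor reachable by a controlled path can contribute. This is the mechanism, already deployed in Sections~\ref{sec-paths} and \ref{sec:types} and justified by Theorems~\ref{theo-paths} and \ref{theo-types}, that converts a global count into a local count at the anchor. For $E^{>k}X\f$ this is direct: unfolding shows that $F(\f)\wedge\mf{\ifc}[\nom\!=\!1\wedge\nom]$ is satisfied at exactly the $F(\f)$-children of the anchor, so the counting formula holds iff the anchor has more than $k$ children satisfying $\f$, which combined with the inductive hypothesis matches the GCTL clause.

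For $E^{>k}G\f$ and $E^{>k}\f U \ff$ the central step I would formalise as a short preliminary observation is that, in a finite tree, every descending path ending in a fixed node is uniquely determined by that endpoint. The fixpoint $\muf{\x}{F(\f)\wedge[\mf{\ifc}\x\vee(\nom\!=\!1\wedge\nom)]}$ characterises those descendants of the anchor whose entire ancestor chain up to (and including) the anchor satisfies $F(\f)$. In the $G$ case, intersecting this set with $\neg\mf{\fc}\top$ isolates the leaves of such chains, and by the uniqueness observation counting more than $k$ such leaves is equivalent to counting more than $k$ maximal all-$\f$ paths from the anchor; this matches the GCTL semantics. The $U$ case is handled in the same style, with $F(\ff)$-witnesses replacing leaves and with the disjunct $\nom$ accounting for the degenerate path in which the anchor itself is the witness.

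The step I expect to be the main obstacle is precisely this witness-to-path correspondence in the $G$ and $U$ cases: one must verify that each path contributes exactly one witness and that distinct paths yield distinct witnesses, so that the cardinalities on both sides of the equivalence agree. This uses only the tree-shape assumption that each non-root node has a unique parent, so once it is isolated as a separate bijection lemma the remaining verifications reduce to unfolding Definition~\ref{def:sem} and invoking the inductive hypothesis.
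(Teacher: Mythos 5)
Your proposal is correct and follows essentially the same route as the paper's proof: structural induction, with the $\nom$-anchor reducing the global count to a local one, and the crucial observation for the $E^{>k}G$ and $E^{>k}\f U\ff$ cases that unique parents make each descending path determined by its endpoint, so counting leaf (resp.\ $\ff$-witness) nodes counts paths. Isolating that correspondence as a separate bijection lemma is a minor repackaging of the paper's one-line remark, not a different argument.
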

\begin{proof}
By induction on the structure of the input formula.

The base case, when the formula is a proposition, is trivial.
 The cases
 of disjunction and negation are immediate by induction.

Consider now the case when the input formula has the form $E^{>k} X \f$.
By induction we know that $\f$ is satisfiable by $T$, if and only if, $F(\f)$ also does.
Now, it is easy to see that $F(\f) \wedge \mf{\ifc}\top$ selects all
the children nodes where $\f$ is true.
Then $\cf{\left( F(\f) \wedge \mf{\ifc}\left[\nom \wedge \nom=1\right]\right)}>k$
is true when the single node marked by $\nom$ has more than $k$
children where $\f$ holds.
Therefore $F(E^{>k}X \f)$ is satisfiable by $T$, if and only if, $E^{>k}X \f$ also does.

Consider now the case for $E^{>k}G\f$. 
By induction we know that $T$ satisfies $\f$, if and only if, $T$ also satisfies $F(\f)$.
Now, recall that $E^{>k}G\f$ is actually counting children paths where $\f$ is true in
each node of the paths. 
Since each node can have one parent only, then each path in $T$ can be distinguished
by the leaf nodes. We can count leaf nodes, and hence paths, with  formula $(\neg\mf{\fc}\top)>k$.
Paths starting at a node $\nom$ where $\f$ is true at each node can be denoted by
$\muf{\x}F(\f) \wedge \left[ \mf{\ifc}\x \vee \left(\nom\wedge \nom=1\right)\right]$.
It is now easy to see that $T$ satisfies
$F(E^{>k}G\f)$,
if and only if, there are at least $k$ children paths starting at node $\nom$, such that
$\f$ holds at each node of the paths.

The remaining case is analogous.

Regarding the size of translation, it is clear that $F$ does not introduce duplications, and
the proof also goes straightforward by induction on the structure of the input formula.
\end{proof}

In order to show that $\mu$TLIN is at least exponentially more succinct than
the graded $\mu$-calculus, we then first define the logic.
\begin{defi}[Graded $\mu$-calculus]
The set of formulas of the graded $\mu$-calculus is defined by the following grammar.
\begin{align*}
\f:= \p \mid \x \mid \neg \f \mid \f \vee \f \mid \mf{\m}\f \mid \muf{\x}\f 
\mid E^{>k}X \f
\end{align*}
\end{defi}
Modalities does not include two-way navigation, that is, $\m\in\{\fc,\fs\}$.
Formulas are interpreted as node subsets of a given tree structure.
The interpretation in the formula fragment corresponding to $\mu$TLIN is
the same as in $\mu$TLIN.
The formula $E^{>k}X \f$ is interpreted as in GCTL.

We now recall a Theorem from Bianco et al.  regarding the exponential
succinctness of GCTL with respect to the graded $\mu$-calculus.
\begin{thm}[GCTL succinctness \cite{BiancoMM10,BiancoMM12}] \label{theo:succ}
There is a GCTL formula $\f$, such that every equivalent graded $\mu$-calculus formula
has exponential size with respect to $\f$.
\end{thm}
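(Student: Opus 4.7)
My plan is to establish the succinctness gap by exhibiting a witness family and then invoking an automata-theoretic lower bound. The key intuition is that GCTL's graded operators count \emph{downward paths} using constants written in binary, while the graded $\mu$-calculus only counts \emph{immediate children}; to simulate path-counting, a graded $\mu$-calculus formula must aggregate information along the tree by ``unrolling'' numerical constraints, and this is exactly where an exponential blow-up should appear.

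Concretely, I would take the witness family $\f_n = E^{>2^n-1}F\,\p$. Because $k$ is written in binary, each $\f_n$ has size $O(n)$. The formula asserts that there are at least $2^n$ downward paths leading to a node labeled $\p$. Next, I would build two families of trees $T_n$ and $T_n'$ of polynomial size such that $T_n \models \f_n$ while $T_n'\not\models\f_n$, arranged so that they agree locally (same labels, same children multiplicities, same bounded-depth neighborhoods) and differ only in the \emph{global} number of $\p$-paths by one. A convenient construction uses balanced trees where the number of paths reaching $\p$ is encoded by a chain of binary choices, so that distinguishing $T_n$ from $T_n'$ forces a formula to effectively carry out arithmetic modulo $2^n$.

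The core of the argument is the lower bound. I would pass from any hypothetical equivalent graded $\mu$-calculus formula $\ff_n$ to its canonical alternating parity tree automaton $\mathcal{A}_n$, whose state space is polynomial in $|\ff_n|$. Since graded transitions of $\mathcal{A}_n$ only inspect the multiset of children, and the parity/fixpoint structure propagates information through nodes rather than paths, any bookkeeping about how many $\p$-paths exist \emph{below} a node must be encoded in its state. A pumping/fooling-pair argument on the $T_n,T_n'$ family then shows that $\mathcal{A}_n$ must have at least $2^n$ pairwise non-equivalent states (one per possible path-count residue), forcing $|\ff_n| = 2^{\Omega(n)}$ and yielding the claim.

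The main obstacle will be the rigorous separation step: one must certify that no clever combination of nested graded children-modalities and least/greatest fixpoints can ``re-discover'' path counting within polynomial size. I would address this via a Ehrenfeucht--Fra\"{\i}ss\'e-style game tailored to the graded $\mu$-calculus (as developed by Janin and Walukiewicz and adapted to graded modalities), where Duplicator's winning strategy on $(T_n,T_n')$ is preserved under all children-graded moves of polynomial rank but broken by the global path counter of $\f_n$. This game-based separation, rather than the witness formula itself, is where the technical weight of the proof lies, and it is precisely the point developed in detail by Bianco et al.~\cite{BiancoMM10,BiancoMM12}.
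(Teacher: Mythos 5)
First, a point of comparison: the paper does not prove this statement at all --- Theorem~\ref{theo:succ} is explicitly \emph{recalled} from Bianco et al.\ \cite{BiancoMM10,BiancoMM12} and imported by citation, so there is no in-paper argument to measure your proposal against. Your sketch is a reconstruction of the external result, and it correctly locates the source of the blow-up: GCTL's graded path quantifiers count \emph{downward paths} against a binary-coded constant, whereas graded modalities only count \emph{children}, so any simulation must redistribute the budget among subtrees (exactly the $\bigvee_{k_1+k_2=\cdot}$ pattern in this paper's own Definition~\ref{def:enconding}), which naively costs $2^{\Omega(n)}$ when $k=2^n$. The witness family $E^{>2^n-1}F\,\p$, stated as a family rather than the single formula of the theorem's (loose) phrasing, is also the right kind of formula.

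As a proof, however, your proposal has a genuine gap at precisely the step you flag as carrying ``the technical weight'': the lower bound is asserted, not established. Two concrete problems. (i) The fooling-pair/state-counting argument you describe is the standard one for deterministic or nondeterministic automata; a graded $\mu$-calculus formula compiles to an \emph{alternating} (graded) parity tree automaton, and for alternating automata a pumping argument does not directly yield state lower bounds: a small alternating automaton can launch many copies and combine their verdicts boolean-wise, so ``at least one state per path-count residue'' does not follow from the existence of $2^n$ pairwise distinguishable counts. One must either first nondeterminize (paying an exponential that must then be disentangled from the one being proved) or run a bespoke argument on the alternating model. (ii) The Ehrenfeucht--Fra\"{\i}ss\'e game ``tailored to the graded $\mu$-calculus'' with the required preservation property at polynomial rank is named but never constructed; its existence is essentially equivalent to the separation you are trying to prove. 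Since the paper itself only cites this theorem, deferring the hard step to \cite{BiancoMM10,BiancoMM12} is acceptable in context, but your text should then be presented as an explanation of why the result holds rather than as a self-contained proof.
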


From Theorems~\ref{sec:succincteness} and \ref{theo:succ}, it is then easy to infer an exponential succinctness
of $\mu$TLIN formulas with respect to the graded $\mu$-calculus.
\begin{cor}[$\mu$TLIM succinctness]
For any tree $T$ and valuation $V$, 
there is a $\mu$TLIN formula $\f$, such that  every graded $\mu$-calculus
formula $\ff$ is that  if 
$$\sem{\f}{T}{V}\neq\emptyset \text{ if and only if }\sem{\ff}{T}{V}\neq\emptyset,$$
then $\ff$ has exponential size with respect to $\f$.
\end{cor}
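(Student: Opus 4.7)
The plan is to chain together the linear embedding of GCTL into $\mu$TLIN (just established in the Embedding theorem) with the known exponential succinctness of GCTL over the graded $\mu$-calculus (Theorem~\ref{theo:succ}). In short, if GCTL is exponentially more succinct than the graded $\mu$-calculus, and GCTL embeds linearly into $\mu$TLIN, then $\mu$TLIN inherits the exponential succinctness gap.

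First I would invoke Theorem~\ref{theo:succ} to pick a witness GCTL formula $\chi$ for which every equivalent graded $\mu$-calculus formula has size $2^{\Omega(|\chi|)}$. I would then set $\f := F(\chi)$, where $F$ is the GCTL-to-$\mu$TLIN translation. Two properties are needed from $F$: linearity, i.e.\ $|\f| = O(|\chi|)$, and preservation of the satisfaction correspondence, i.e.\ for every tree $T$ and valuation $V$, $\sem{\chi}{T}{} \neq \emptyset$ if and only if $\sem{\f}{T}{V} \neq \emptyset$. Both are given by the Embedding theorem.

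Next I would argue the lower bound by contradiction. Suppose $\ff$ is a graded $\mu$-calculus formula with $\sem{\f}{T}{V} \neq \emptyset$ iff $\sem{\ff}{T}{V} \neq \emptyset$ for all $T$ and $V$. Chaining with the embedding equivalence, $\ff$ and $\chi$ agree on model-satisfaction across all trees. Applying Theorem~\ref{theo:succ} yields $|\ff| = 2^{\Omega(|\chi|)}$, and linearity of $F$ gives $|\chi| = \Omega(|\f|)$, so $|\ff| = 2^{\Omega(|\f|)}$, which is the required exponential blow-up.

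The main obstacle is bridging the notion of equivalence. Theorem~\ref{theo:succ} is stated in terms of GCTL-equivalence (presumably agreement of interpretations on every tree), while the corollary uses the weaker pointwise non-emptiness criterion. One has to verify that the hard instance in Bianco et al.\ still discriminates under the weaker notion—this is typically unproblematic because succinctness lower bounds in this line of work proceed by exhibiting a family of distinguishing tree models, and pointwise satisfiability coincidence across all trees is already enough to force any candidate graded $\mu$-calculus formula to separate the same models. If one wants to avoid this adaptation, an equivalent route is to strengthen the Embedding theorem so that $F(\chi)$ and $\chi$ have the very same interpretation on every node of every tree; inspecting the translation $F$ shows this is the case, up to the fresh nominal $\nom$ used to fix the counting origin, so semantic equivalence transfers, and the succinctness argument applies verbatim.
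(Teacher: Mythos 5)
Your proposal is correct and follows essentially the same route as the paper, which offers no explicit proof beyond the remark that the corollary follows by chaining the linear GCTL-to-$\mu$TLIN embedding with the Bianco et al.\ succinctness result for GCTL over the graded $\mu$-calculus --- exactly the argument you spell out. Your additional care about the mismatch between full semantic equivalence (as used in the cited succinctness theorem) and the weaker satisfiability-coincidence used in the corollary is a point the paper silently glosses over, and your resolution of it is sound.
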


\section{Decidability} \label{sec:decidability}
In this Section, we show that
the $\mu$TLIN is decidable.
This is achieved by a reduction to the two-way $\mu$-calculus \cite{DBLP:conf/icalp/Vardi98}.
Before describing the reduction, we first need to recall a well-known bijection between
binary and $n$-ary trees.

\subsection{Binary trees}
 There is well-known
bijection between $n$-ary unranked trees and binary unranked trees \cite{HosoyaVP05}.
One of the edges in the binary trees represents the {\em first child relation},
whereas the other edge represent the {\em following sibling relation}.
In Figure~\ref{fig:bijection} there is a graphical representation of this bijection.
Therefore, from now on, without loss of generality, we will consider binary trees only.

\begin{figure}[t]
\begin{tikzpicture}[level/.style={sibling distance=20mm},scale=0.75]
\node [circle,draw,fill,color=black] (n0) {$n$} 
	child {node [circle,draw,fill,color=gray] (n1) {$n$} }
	child {node [circle,draw,fill,color=gray] (n2) {$n$}
		child {node [circle,draw,fill,color=lightgray] (n21) {$n$}}
		child {node [circle] (n20) {$\ldots$} }
		child {node [circle,draw,fill,color=lightgray] (n2k) {$n$}}
	}
	child {node [circle,draw,fill,color=gray] (n3) {$n$} }
	child {node [circle] (n31) {$\ldots$} }
	child {node [circle,draw,fill,color=gray] (nk) {$n$} }
        child [grow=right,level distance = 7cm] {node [circle,draw,fill,color=black] (m0) {n} 
                edge from parent[draw=none]
		child [grow=down,level distance = 15mm]{node [circle,draw,fill,color=gray] (m1) {$n$}
			child [grow=right] {node [circle,draw,fill,color=gray] (m2) {$n$} 
			        child [grow=right] {node [circle,draw,fill,color=gray] (m3) {$n$} 
					child {node [circle] (m31) {$\ldots$}
						child {node [circle,draw,fill,color=gray] (mk) {$n$}}
					}
				}
				child [grow=down] {node [circle,draw,fill,color=lightgray] (m21) {$n$} 
					child [grow=right] {node [circle] (m20) {$\ldots$}
						child {node [circle,draw,fill,color=lightgray] (mk) {$n$}}
					}
				}
			}
		 }
        }
;
\end{tikzpicture}

\caption{Example of the bijection between $n$-ary and binary trees.}
\label{fig:bijection}
\end{figure}
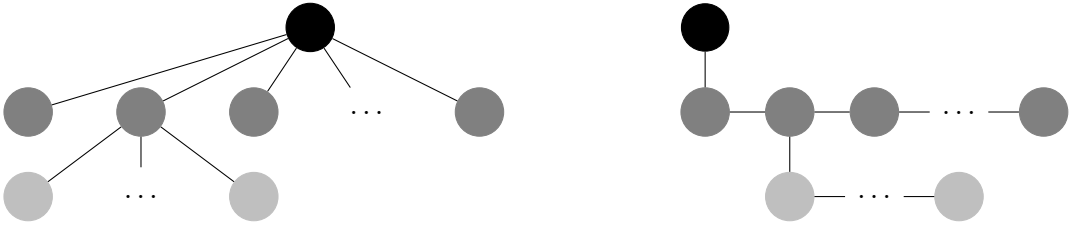

At the logic level, we now reinterpret the modal formula $\mf{\m}\phi$ as follows:
\begin{itemize}
\item formula $\mf{\fc}\phi$ selects the nodes where $\phi$ holds in its first child;
\item formula $\mf{\ifc}\phi$ selects the nodes whose parent satisfy $\phi$;
\item $\mf{\fs}\phi$ holds in nodes where $\phi$ is satisfied by its following sibling; and
\item $\mf{\ifs}\phi$ satisfies nodes such that $\phi$ holds in its previous sibling.
\end{itemize}

\begin{prop}
Consider a bijection $f$ from $n$-ary trees to binary trees, as the one in  \cite{HosoyaVP05}. We have the following:
\begin{itemize}
\item for any $n$-ary tree $T$, valuation $V$, and $\mu$TLIN formula $\f$, 
there is  a $\mu$TLIN formula $\ff$ such that
$$\sem{\f}{T}{V}=\sem{\ff}{f(T)}{V};$$
\item and for any binary tree $B$, valuation $V$, and $\mu$TLIN formula $\ff$,
there is a $\mu$TLIN formula $\f$ such that
$$\sem{\ff}{B}{V}=\sem{\f}{f^{-1}(B)}{V}.$$
\end{itemize}
\end{prop}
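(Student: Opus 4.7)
My plan is to construct, in each direction, a syntactic translation of $\mu$TLIN formulas that compensates for the shift in meaning of the modalities under the bijection $f$, and then to verify correctness by structural induction, identifying each node $\nod$ of $T$ with its image $f(\nod)$ in $f(T)$.

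For the first item, I would define a translation $F$ on formulas satisfying $\sem{\f}{T}{V} = \sem{F(\f)}{f(T)}{V}$ under this identification. Atoms, variables, boolean connectives, and the fixpoint binder translate homomorphically, and the counting construct translates as $F(\cf{\f}{> k}) = \cf{F(\f)}{> k}$: this is sound because, by the inductive hypothesis, $f$ restricts to a bijection between $\sem{\f}{T}{V}$ and $\sem{F(\f)}{f(T)}{V}$, so their cardinalities agree and the counting clause yields either the whole node set or $\emptyset$ in each model. The substantive cases are the four modalities, since under the binary encoding the edge $\fc$ represents only the first child and $\fs$ only the following sibling. I would set
\begin{align*}
F(\mf{\fc}{\f}) &= \mf{\fc}{\muf{y}{(F(\f) \vee \mf{\fs}{y})}}, \\
F(\mf{\ifc}{\f}) &= \muf{y}{(\mf{\ifc}{F(\f)} \vee \mf{\ifs}{y})},
\end{align*}
with $y$ chosen fresh, together with $F(\mf{\fs}{\f}) = \mf{\fs}{F(\f)}$ and $F(\mf{\ifs}{\f}) = \mf{\ifs}{F(\f)}$. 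The first clause expresses that some child of $\nod$ in $T$ is reached in $f(T)$ by taking the binary first-child edge and then walking along the $\fs$-chain; the second, that the $n$-ary parent is found in $f(T)$ by walking backwards along $\ifs$-edges to the first-child position and then taking the binary parent.

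For the converse direction, I would define a dual translation $G$. Since the modality $\mf{\fc}{\ff}$ on $f(T)$ already selects only the first child, the $n$-ary expression must restrict to first children: $G(\mf{\fc}{\ff}) = \mf{\fc}{(\neg \mf{\ifs}{\top} \wedge G(\ff))}$, and similarly $G(\mf{\ifc}{\ff}) = \neg \mf{\ifs}{\top} \wedge \mf{\ifc}{G(\ff)}$. Sibling modalities again translate to themselves, and all remaining cases are homomorphic.

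Correctness in both directions proceeds by structural induction on the input formula. The boolean and atomic cases are immediate; the fixpoint case reduces to the observation that the monotone operators on subsets of $\Nodes$ induced by the respective formulas are conjugate via $f$, and since $f$ commutes with arbitrary intersections, their least pre-fixpoints correspond. The main technical obstacle is the forward modality case, where I need an auxiliary induction on the length of the sibling chain to verify that $\muf{y}{(F(\f) \vee \mf{\fs}{y})}$ evaluated at a binary first-child $c$ denotes precisely those nodes reachable from $c$ by iterated $\fs$-steps at which $F(\f)$ holds. Once this subsidiary lemma is in hand, the equalities $\sem{\mf{\fc}{\f}}{T}{V} = \sem{F(\mf{\fc}{\f})}{f(T)}{V}$ and the analogous one for $\ifc$ follow immediately, and the full induction closes.
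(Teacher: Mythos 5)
Your treatment of the first item coincides with the paper's: the same structural induction, with the identical key translations $\mf{\fc}{\f}\mapsto\mf{\fc}{\muf{y}{(F(\f)\vee\mf{\fs}{y})}}$ and $\mf{\ifc}{\f}\mapsto\muf{y}{(\mf{\ifc}{F(\f)}\vee\mf{\ifs}{y})}$ that simulate the unranked child and parent relations by a first-child step composed with a sibling chain, together with a homomorphic treatment of the remaining constructors; your explicit cardinality argument for $\cf{\f}{>k}$, using that $f$ is a bijection on nodes so the two extensions have equal size, makes precise a point the paper leaves implicit. The only genuine divergence is the second item: the paper dismisses it as trivial and takes $\f:=\ff$ verbatim, whereas you insert first-child guards such as $\mf{\fc}{(\neg\mf{\ifs}{\top}\wedge G(\ff))}$ and $\neg\mf{\ifs}{\top}\wedge\mf{\ifc}{G(\ff)}$ to compensate for the fact that the binary-tree $\fc$-edge reaches only the first child while the unranked $\fc$ reaches any child. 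Your version is the more defensible one --- the identity translation is sound only under a reading in which the binary modalities are already interpreted as first-child and next-sibling relations over the unranked node set --- so this difference buys correctness rather than mere style, and neither direction needs the other's machinery. The remaining ingredients you supply (conjugacy of the monotone fixpoint operators under $f$, and the auxiliary induction along the sibling chain) are sound ways of filling in steps the paper compresses into ``immediate.''
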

\begin{proof}
Consider the first item.
We proceed by induction on the structure of $\f$.
The base and most inductive cases are immediate. We consider the modal case only.
If the input formula has the form $\mf{\fc}{\fff}$, then $\ff$ is
$\mf{\fc}\muf{x}\fff^\prime \vee \mf{\fs} \x$, 
where $\fff^\prime$ is the equivalence (by induction) of $\fff$.
When the input formula is $\mf{\ifc}{\fff}$, then $\ff$ is 
$\muf{\x}\mf{\ifc}\fff^\prime \vee \mf{\ifs}\x$.
The cases for $\mf{\fs}\fff$ and $\mf{\ifs}\fff$ are analogous.
The second item  is trivial: $\ff$ is defined as $\f$.
\end{proof}


\subsection{Reduction}
We now provide a reduction from $\mu$TLIN to the two-way $\mu$-calculus,
that is, we will describe an encoding of counting formulas $\cf{\f}{>k}$
into plain $\mu$-calculus formulas.
For this purpose, we first define a $\mu$-calculus formula counting from the root.

\begin{defi}\label{def:enconding}
 We define the following formulas for $i>1$:
\begin{align*}\
 C_0^\f =&\muf{\x}{\f \vee \mf{\fc}{\x} \vee \mf{\fs}{\x}}\\
C_1^\f= & \muf{x} \left(\f \wedge \left(\mf{\fc}{C^\f_{0}} \vee \mf{\fs}{C^\f_{0}}\right)\right)\vee
			 \left (  \neg\f \wedge 
                \mf{\fc}{C^\f_{0}} \wedge \mf{\fs}{C^\f_{0}}  \right) \vee
		 \mf{\fc}{\x} \vee \mf{\fs}{\x}\\
 C_i^\f =& \muf{x} \left(\f \wedge  \left(\mf{\fc}{C^\f_{i-1}} \vee \mf{\fs}{C^\f_{i-1}} \vee \bigvee_{k_1+k_2=i-2}  \mf{\fc}{C^\f_{k_1}} \wedge \mf{\fs}{C^\f_{k_2}}\right) \right) \vee \\
    & \left (  \neg\f \wedge 
       \bigvee_{k_1+k_2=i-1}  \mf{\fc}{C^\f_{k_1}} \wedge \mf{\fs}{C^\f_{k_2}}  \right)
		\vee \mf{\fc}{\x} \vee \mf{\fs}{\x}
\end{align*}
\end{defi}
From the root node, $C_k^\f$ counts at least $k+1$ nodes satisfying $\f$. 
In Figure~\ref{fig:Cp} there is an example model for $C^{p_1}_3$ holding at the root.
$C^{p_1}_3$ counts at least $4$ nodes named $p_1$.

\begin{figure}[t]
\begin{center}
\begin{tikzpicture}[level/.style={sibling distance=25mm},level distance=10mm]
\node [circle,draw] (z){$p_1$}
	child {node [circle,draw] (a) {$p_1$} 
		child {node [circle,draw] (a1) {$p_1$}	
			child [grow=up,level distance = 7mm] {node (r) {$C^{p_1}_0$} edge from parent[draw=none]}
		}
		child [grow=up,level distance = 7mm] {node (r) {$C^{p_1}_2$} edge from parent[draw=none]}
		child {node [circle,draw] (a2) {$p_2$}
		         child {node [circle,draw] (a21) {$p_1$}
				child [grow=up,level distance = 7mm] {node (r) {$C^{p_1}_0$} edge  from parent[draw=none]}
			}
			child [grow=up,level distance = 7mm] {node (r) {$C^{p_1}_0$} edge from parent[draw=none]}
			child {node [circle,draw] (a22) {$p_2$}}
		}
	}
	child [grow=up,level distance = 7mm] {node (r) {$C^{p_1}_3$} edge from parent[draw=none]}
	child [sibling distance = 35mm]{node [circle,draw] (v) {$p_2$} 
		child {node [circle,draw] (v2) {$p_2$}}
		child {node [circle,draw] (v2) {$p_2$}}
	}
;
\end{tikzpicture}
\end{center}
\caption{Example model for $C^{p_1}_3$ holding at the root}
\label{fig:Cp}
\end{figure}
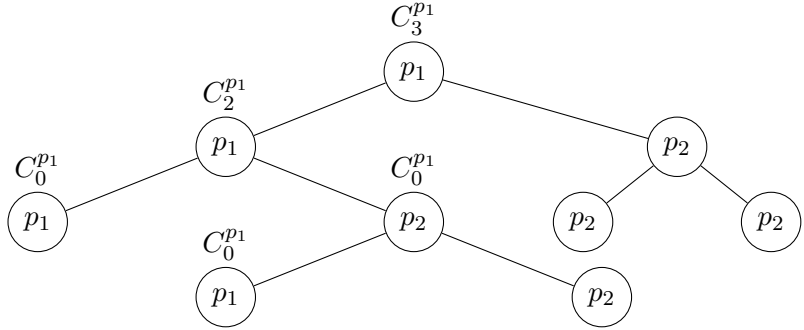

Recall that, in a tree, the root is the only node without a parent, hence
the root $r$ can be denoted by the formula
 $\neg\mf{\ifc}{\top} \wedge\neg\mf{\ifs}{\top}$.
We can thus reach the root  from any other node with the following formula:
$$\muf{\x} r \vee \mf{\ifc}\x \vee \mf{\ifs}\x$$
Now, with the help of $C^\f_k$, we can  now show how to encode counting formulas 
into the simple $\mu$-calculus (without counting constructs).
\begin{lemma}\label{lemma:encoding}
For any tree $T$ and valuation $V$, we have the following:
\begin{align*}
 \sem{\cf{\f}{>k}}{T}{V} =& \sem{\muf{\x}{(C^\f_k \wedge r) \vee \mf{\ifc}{\x} \vee \mf{\ifs}{\x}}}{T}{V}
\end{align*}
\end{lemma}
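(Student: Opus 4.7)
The plan is to establish that both sides of the claimed equality are globally zero–one — either all of $\Nodes$ or $\emptyset$ — and then show they are nonempty under the same condition. The right-hand side $\muf{\x}{(C^\f_k \wedge r) \vee \mf{\ifc}{\x} \vee \mf{\ifs}{\x}}$ is a least fixpoint that, from any node, traverses upward through parents and previous siblings until reaching the unique root; since every node of a finite tree can reach the root along some $\ifc/\ifs$-path, this formula denotes $\Nodes$ precisely when $C^\f_k$ holds at the root and $\emptyset$ otherwise. The left-hand side, by the semantic clause for $\cf{\f}{>k}$, denotes $\Nodes$ precisely when $|\sem{\f}{T}{V}| > k$. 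Thus the lemma reduces to showing that $C^\f_k$ holds at the root iff $|\sem{\f}{T}{V}| > k$.

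I would prove a stronger local statement by induction on $k$: for every node $n$, $n \in \sem{C^\f_k}{T}{V}$ iff the subtree $T_n$ rooted at $n$ (the set of nodes reachable from $n$ by $\fc,\fs$-paths in the binary encoding) contains strictly more than $k$ nodes satisfying $\f$. Taking $n$ to be the root then yields the reduction above, since $T_{\text{root}}$ is the full node set. The base case $k=0$ is immediate: $C^\f_0$ unfolds to search the $\fc,\fs$-descendants of $n$ for any $\f$-node.

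For the inductive step, the soundness direction ($C^\f_k(n)$ implies $|T_n \cap \sem{\f}{T}{V}| > k$) follows by observing that the outer fixpoint on $\x$ forces some $n' \in T_n$ to satisfy the non-recursive base disjunction of $C^\f_k$; direct inspection of that disjunction, combined with the outer induction hypothesis applied to $C^\f_{k-1}$, $C^\f_{k_1}$ and $C^\f_{k_2}$, shows that $T_{n'}$ already contains at least $k+1$ $\f$-nodes, crucially using that the first-child and next-sibling subtrees of any node are disjoint in the binary encoding and together partition $T_{n'} \setminus \{n'\}$. For completeness, I would argue by structural induction on $T_n$, nested inside the outer induction on $k$: let $a$ and $b$ be the $\f$-counts in the first-child and next-sibling subtrees of $n$. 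If $a > k$ or $b > k$, the structural hypothesis yields $C^\f_k$ at the first child or next sibling, which propagates to $n$ via the $\mf{\fc}{\x} \vee \mf{\fs}{\x}$ disjunct; otherwise $a, b \leq k$, and a careful choice of indices $k_1 \leq a-1$ and $k_2 \leq b-1$ with $k_1 + k_2 = k-2$ (when $\f(n)$) or $k_1 + k_2 = k-1$ (when $\neg\f(n)$) activates the appropriate base disjunct at $n$ itself, invoking the outer hypothesis on $C^\f_{k_1}$ and $C^\f_{k_2}$.

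The main obstacle is the combinatorial bookkeeping in the completeness direction: checking that nonnegative indices $k_1, k_2$ summing to $k-1$ or $k-2$ can always be produced when $T_n$ has the required count, verifying that the first child or next sibling actually exists whenever its modality is needed, and disposing of the degenerate cases $k=0$ and $k=1$ where some of the $\bigvee_{k_1+k_2=\cdot}$ disjuncts are vacuous. Each of these is forced by the inequalities $a, b \leq k$ together with $a+b \geq k$ or $a+b \geq k+1$: when nondegenerate, these force both $a \geq 1$ and $b \geq 1$, and a suitable $(k_1,k_2)$ pair is always available within the allowed ranges.
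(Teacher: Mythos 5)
Your overall strategy is essentially the paper's: reduce the global equality to the claim that $C^\f_k$ holds at the root iff $\f$ has more than $k$ witnesses (the upward fixpoint with $r$ just propagates that fact to every node), then run an induction on $k$ interleaved with a structural induction on the tree, splitting on whether $\f$ holds at the current node and on how the witnesses distribute over the first-child and next-sibling subtrees. Your version is in fact more explicit than the paper's, which dismisses the base cases as trivial and the inductive step as ``easy to see''; the local invariant you isolate ($C^\f_k$ holds at $n$ iff the subtree rooted at $n$ contains more than $k$ $\f$-nodes) is the right one, and your soundness direction is correct.

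There is, however, one configuration your completeness case split does not cover. Suppose $\f$ holds at $n$, the first-child subtree contains exactly $k$ witnesses ($a=k$) and the next-sibling subtree contains none ($b=0$); the total is $k+1$, so $C^\f_k$ must hold at $n$. Your first branch requires $a>k$ or $b>k$, which fails; your second branch needs $k_2\leq b-1=-1$, which is impossible. The only disjunct of $C^\f_k$ that fires here is $\mf{\fc}{C^\f_{k-1}}$ (the first-child subtree has $k>k-1$ witnesses, so the outer induction hypothesis applies), and your completeness argument never invokes the single-modality disjuncts $\mf{\fc}{C^\f_{k-1}}\vee\mf{\fs}{C^\f_{k-1}}$ at all --- you use them only in the soundness direction. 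The repair is easy: when $\f$ holds at $n$, split on $a\geq k$ (resp.\ $b\geq k$) rather than $a>k$ and route those cases through $\mf{\fc}{C^\f_{k-1}}$ (resp.\ $\mf{\fs}{C^\f_{k-1}}$); in the remaining branch $a,b\leq k-1$ together with $a+b\geq k$ does force $a,b\geq 1$, and your index-choosing argument for $\bigvee_{k_1+k_2=k-2}$ goes through. When $\f$ fails at $n$ your thresholds are already adequate, since $a+b\geq k+1$ with $a,b\leq k$ forces $a,b\geq 1$ as you note.
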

\begin{proof}
The proof goes by induction on $k$ in $C^\f_k$.
The base cases $C_0^\f$ and $C_1^\f$ are trivial.
For the induction step we distinguish two cases:
\begin{itemize}
\item Assume $\f$ holds at the root, we have then $1$ occurrence of $\f$.
It is then  easy to see by induction that
$$\mf{\fc}{C^\f_{k-1}} \vee \mf{\fs}{C^\f_{k-1}} \vee \bigvee_{k_1+k_2=k-2}  \mf{\fc}{C^\f_{k_1}} \wedge \mf{\fs}{C^\f_{k_2}}$$ 
counts $k$ occurrences of $\f$. There are then $k+1$ occurrence of $\f$.
\item Assume $\f$ does not hold at the root. Then there are two subcases:
\begin{itemize}
\item There are occurrences of $\f$ in both subtrees, in which cases
by induction we know that
$$\bigvee_{k_1+k_2=k-1}  \mf{\fc}{C^\f_{k_1}} \wedge \mf{\fs}{C^\f_{k_2}} $$
counts $k+1$ occurrence of $\f$.
\item The other case is when there are not occurrences of $\f$ in one of the subtrees.
We then apply recursion on the subtrees ($\mf{\fc}\x \vee \mf{\fs}\x$). 
The rest of the proof is immediate by induction on the height of the tree model.\qedhere
\end{itemize}
\end{itemize}
\end{proof}

\noindent Now that we can encode the counting formulas into plain two-way $\mu$-calculus, 
then we can infer that $\mu$TLIN is decidable due to the fact that
$\mu$-calculus is decidable.
However, the encoding of counting formulas results in exponentially larger
$\mu$-calculus formulas.
 
\begin{thm} \label{theo:expressiveness}
$\mu$TLIN  is decidable in double exponential time.
\end{thm}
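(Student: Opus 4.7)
The plan is to reduce satisfiability of $\mu$TLIN to satisfiability of the two-way $\mu$-calculus, using the encoding $C^\f_k$ and Lemma~\ref{lemma:encoding} developed above. The known fact that two-way $\mu$-calculus satisfiability is in single exponential time \cite{DBLP:conf/icalp/Vardi98} then supplies the outer exponential, and a careful size analysis of the encoding will account for the extra exponential blow-up.

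Concretely, I would define a recursive translation $G$ from $\mu$TLIN formulas to plain two-way $\mu$-calculus formulas. On propositions, variables, Booleans, modalities, and fixpoints, $G$ acts homomorphically. For the only interesting case, counting formulas, $G$ sends $\cf{\f}{> k}$ to $\muf{\x}{(C^{G(\f)}_k \wedge r) \vee \mf{\ifc}{\x} \vee \mf{\ifs}{\x}}$, where $r \equiv \neg\mf{\ifc}\top \wedge \neg\mf{\ifs}\top$ identifies the root. Correctness, i.e.\ $\sem{\fff}{T}{V} = \sem{G(\fff)}{T}{V}$ for every tree $T$ and valuation $V$, then follows by structural induction on $\fff$: the base cases and the homomorphic cases are immediate, while the counting case is exactly Lemma~\ref{lemma:encoding} applied to the already-translated inner formula $G(\f)$.

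The second step is to bound the size of $G(\fff)$. For a single counting operator $\cf{\f}{> k}$, the formula $C^{G(\f)}_k$ is built by recursion on $i = 0, 1, \ldots, k$, each level of which only re-uses the subformulas $C^{G(\f)}_0, \ldots, C^{G(\f)}_{i-1}$; sharing these subformulas yields a representation of size polynomial in $k$ and linear in $|G(\f)|$. Because $k$ is written in binary, $k \leq 2^{|\fff|}$, so replacing one counting subformula increases the overall formula size by a factor at most singly exponential in $|\fff|$. Processing all (possibly nested) counting operators in $\fff$ bottom-up, the final formula $G(\fff)$ has size at most singly exponential in $|\fff|$.

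Feeding $G(\fff)$ to the single exponential satisfiability procedure for the two-way $\mu$-calculus then yields a doubly exponential upper bound on $\mu$TLIN satisfiability, which is exactly what the theorem claims. The main obstacle I anticipate is precisely the size analysis of the encoding: the recurrence defining $C^\f_i$ involves the big-disjunctions $\bigvee_{k_1 + k_2 = i-1}$ that re-use previously constructed subformulas, and securing the overall single-exponential bound (as opposed to a double-exponential one that would downgrade the theorem to a triply exponential bound) requires being explicit about how subformula sharing is counted, or equivalently about how the downstream satisfiability algorithm accounts for formula size.
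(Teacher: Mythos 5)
Your proposal follows essentially the same route as the paper: reduce to the plain two-way $\mu$-calculus via the encoding $C^\f_k$ and Lemma~\ref{lemma:encoding}, bound the size of the translated formula by a single exponential, and invoke the EXPTIME decidability of the $\mu$-calculus to obtain the doubly exponential bound. Your explicit handling of nested counting operators and of subformula sharing in the size analysis is in fact more careful than the paper's own argument, which only remarks that $C^\f_k$ is exponentially larger than $|\f|+\log(k+1)$ without spelling out that this requires measuring distinct subformulas (a shared, DAG-like representation) rather than the fully unfolded string --- precisely the point you flag as the main obstacle.
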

\begin{proof}
Observe  in Definition~\ref{def:enconding} that $C^\f_k$ encodes numerical constraints by nesting $k$ modalities
on $\f$.
That is, $\mf{\m}{C^\f_{k-1}},\mf{\m}{\mf{\m}{C^\f_{k-2}}},\ldots,\mf{\m}{\ldots\mf{\m}{C^\f_0}}$ are all subformulas of
$C^\f_k$. 
Since $k$ is in binary form,
this implies that there are $2^k$ different occurrences of $\f$ in $C_k^\f$.
That is, the size of $C_k^\f$ is exponentially greater than the sum of the sizes 
of $\f$ and $k$.
Now, by the fact that the $\mu$-calculus is EXPTIME-complete \cite{BonattiLMV06}, and by Lemma~\ref{lemma:encoding},
we conclude the doubly exponential time complexity bound.
\end{proof}

The  graded  $\mu$-calculus \cite{DBLP:conf/cade/KupfermanSV02} was also shown to be
decidable by a reduction to the plain two-way $\mu$-calulculus by B\'arcenas in 
\cite{barcenasthesis}, then
the expressive power of $\mu$TLIN, the graded $\mu$-calculus and
the plain two-way $\mu$-calculus all coincide.

Being $\mu$TLIN decidable, and by Theorems~\ref{theo-paths} and \ref{theo-types},
we can then use as a reasoning framework for XPath queries with
schema and counting constraints. However, the complexity bound 
for decidability can be improved.
In the rest of the paper, we will describe a
satisfiability algorithm with single exponential time complexity.
Before defining the algorithm, we first describe a  Fischer-Ladner representation of tree models.

\section{Fischer-Ladner Trees} \label{sec-trees}
This is a section of preliminaries for the satisfiability algorithm. 
It is described a syntactic representation of tree models.
 
For the algorithm, we consider formulas in negation normal form (NNF) only.
\begin{defi}[Negation Normal Form]
In the negation normal form $\text{nnf}(\f)$ of a formula $\f$,
  negation occurs only immediately above of propositions, $\top$ and modal subformulas $\mf{\m}{\top}$.
 This is obtained by the following rules together with the usual DeMorgan's: 
\begin{align*}
\neg\mf{\m}{\f} =& \mf{\m}{\neg\f} \vee \neg\mf{\m}{\top},  
&\neg(\cf{\f}{> k})=& \cf{\f}{\leq k}, \\
\neg(\cf{\f}{\leq k})=& \cf{\f}{> k},
&\neg\muf{\x}{\f} =& \muf{\x}{\neg\f}\subst{\x}{\neg\x}.
\end{align*}
\end{defi}
Note that, for technical convenience, we  consider an extension of formulas.
 This extension consists of {\em less than} counting formulas $\cf{\f}{\leq k}$
and the {\em true} formula $\top$ with the obvious semantics.

We require some notation before defining the Fischer-Ladner closure.

Since  integers associated to  counting constraints are assumed to be in binary form,
we thus define counter formulas as a boolean combination of propositions denoting an integer number. For example, for a sequence of propositions $\p_1,\p_2,\ldots$, the integer $1$ is written $\p_1 \wedge \bigwedge_{i>1} \neg \p_i$, and the integer $5$ ($101$ in binary) is written $\p_3 \wedge \neg \p_2 \wedge p_1 \wedge \bigwedge_{i>4}\neg\p_i$.
The amount of propositions required to define the counters of  formula $\f$ is  bounded by $\maxk(\f)$.
\begin{defi}
We define $\maxk(\f)$ as follows:
\begin{align*}
 &\maxk(\p)=\maxk(x)=\maxk(\top)=0 \\
 &\maxk(\mf{\m}{\f})=\maxk(\neg \f)=\maxk(\muf{\x}{\f})=\maxk(\f) \\
 &\maxk(\f_1\vee \f_2)=\maxk(\f_1 \wedge \f_2)=\maxk(\f_1)+\maxk(\f_2)\\ 
 &\maxk(\cf{\f}{\# k}) =\maxk(\f)+(k+1)
\end{align*}
When clear from the context, we often simply write $\maxk$.
\end{defi}
Definitions of counters and flags is now given.
\begin{defi}[Counters and flags]
For a counting subformula $\cf{\f}{\#k}$ of a given formula:
\begin{itemize}
\item a counter $\f^{k^\prime}$ set to $k^\prime$ is a sequence of fresh propositions occurring positively in the binary coding of the integer $k^\prime$; and
\item a flag $\f^{\# k}$ is a fresh proposition.
\end{itemize}
\end{defi}
For instance, for the integer $5$ coded as $c_2\wedge\neg c_1 \wedge c_0$, we
write $\f^5$ to denote $c_2,c_0$, where $c_i$ are the corresponding propositions
for the counting formula $\cf{\f}{\# k}$.

The Fischer-Ladner closure of a given formula is the set of its subformulas together with their negation normal form, such that the fixed-points are expanded once. 
Additionally, a counter and a flag for each counting subformula are also considered in the closure.  
All these information is obtained with the help of  the relation $\R$.
\begin{defi}
We  define the following binary relation $\R$ over formulas for $i=1,2$:
\begin{align*}
&\R(\f,\text{nnf}( \f))&\R(\f_1 \wedge \f_2,\f_i) &&\R(\f_1\vee \f_2,\f_i) \\
&\R(\mf{\m}{\f},\f) &\R(\muf{\x}{\f},\f\subst{\muf{\x}{\f}}{\x}) &&\R(\cf{\f}{\# k},\f) \\
&\R(\cf{\f}{\# k}, \f^{\maxk}) &\R(\cf{\f}{\# k},\flag{\f}{\# k}) &&\R(\cf{\f}{\# k},\ff)
\end{align*}
where  $\ff=\muf{\x_1}{(\muf{\x_2}{\f \vee \mf{\fc}{\x_2} \vee \mf{\fs}{\x_2}}) \vee \mf{\ifc}{\x_1} \vee \mf{\ifs}{\x_1}}$.
Notice that if $\f$ is true in a model, then $\ff$ is true in every node of the model.
We use $\psi$ to provide the necessary information for $\f$ to navigate through the entire model.
\end{defi}

We are now ready to define the Fischer-Ladner closure.
\begin{defi}[Fischer-Ladner Closure]
The Fischer-Ladner closure of a given formula $\f$ is  defined as $\flc(\f)=\flc(\f)_k$, such that $k$ is the smallest integer satisfying $\flc(\f)_{k+1}=\flc(\f)_k$, where: 
\begin{align*}
& \flc(\f)_0 =\{\f\} \\
 &\flc(\f)_{i+1} = \flc(\f)_i \cup \{\ff ^\prime\mid \R(\ff,\ff^\prime),\ff \in \flc(\f)_i\}
\end{align*}
\end{defi}

The lean set of a given formula contains  propositions, modal and counting subformulas, together with counters and flags.
\begin{defi}[Lean]
Given a formula $\f$ and a proposition $\p^\prime$ not occurring in $\f$, we  define the lean as follows for all $\m\in \M$:
\begin{align*}
 \lean(\f) =\{\p,\mf{\m}{\ff},\cf{\ff}{\# k},\ff^\maxk,\flag{\ff}{\# k}\in\flc(\f)\}\cup\{\mf{\m}{\top},\p^\prime\}
\end{align*}
\end{defi}
The lean set contains all the required information to define  tree nodes: 
propositions serve as labels, modal subformulas define the topology of the tree, 
and counters and flags serve to verify  counting subformulas.

As in \cite{BarcenasGLS11, CalvaneseGLV10, GenevesLS07}, the single exponential time complexity of the satisfiability algorithm mainly relies
in the size of the lean set (tree nodes are defined as subsets of the lean).
Since counters  are coded in binary, it is then easy to see that the size of the lean set is not significantly increased with respect to the original formula. 
\begin{lemma}\label{lem-lean}
The cardinality of $\lean(\f)$ is linear with respect to the size of $\f$.
\end{lemma}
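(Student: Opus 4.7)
The plan is to bound $|\lean(\f)|$ by reducing to the size of the Fischer-Ladner closure $\flc(\f)$. Apart from the constant contribution $\{\mf{\m}{\top} : \m \in \M\} \cup \{\p^\prime\}$, every element of the lean is drawn from $\flc(\f)$, and the defining expression selects at most one element per closure member of each of the five listed shapes (proposition, modal subformula, counting subformula, counter $\ff^{\maxk}$, flag $\flag{\ff}{\# k}$). It therefore suffices to prove that $|\flc(\f)| = O(|\f|)$.

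I would establish this by induction on the stage index in the fixed-point construction of $\flc(\f)$, cataloguing the rules of $\R$. The classical cases---passage to negation normal form, projection onto immediate subformulas of $\wedge$ and $\vee$, modal descent, $\mu$-unfolding, and descent into the body of a counting operator---are handled by the standard Fischer-Ladner argument, whose invariant is that every closure element is either a subformula of $\text{nnf}(\f)$ or obtained from one by substituting an already closed $\mu$-subformula for its bound variable; this keeps the total linear in $|\f|$. The new rules fire once per counting subformula $\cf{\ff}{\# k}$ and produce three items: the counter $\ff^{\maxk}$ and the flag $\flag{\ff}{\# k}$ (one closure element each), together with the auxiliary fixpoint $\psi = \muf{\x_1}{(\muf{\x_2}{\ff \vee \mf{\fc}{\x_2} \vee \mf{\fs}{\x_2}}) \vee \mf{\ifc}{\x_1} \vee \mf{\ifs}{\x_1}}$.

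The step I expect to be the main obstacle is ruling out a cascade effect from $\psi$, since $\psi$ is itself a nontrivial nested $\mu$-formula. The key observation is that the shape of $\psi$ is fixed and independent of $k$: its strict subformulas are either $\ff$ (already in the closure via the earlier rule $\R(\cf{\ff}{\# k},\ff)$), bound variables, or members of a constant template of modal and fixpoint fragments, so closing $\psi$ under $\R$ stabilises after boundedly many further steps without reintroducing growth tied to $k$ or to the size of $\ff$. Each counting occurrence therefore contributes $O(1)$ genuinely new closure elements, and over the at most $|\f|$ counting occurrences in $\f$ the total extra contribution remains $O(|\f|)$. Combined with the classical bound this gives $|\flc(\f)| = O(|\f|)$, and hence, by the reduction in the first paragraph, $|\lean(\f)| = O(|\f|)$ as required.
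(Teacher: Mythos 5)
Your reduction to the size of the Fischer--Ladner closure, and your treatment of the classical cases and of the auxiliary fixpoint $\psi$ (whose unfolding indeed contributes only a constant number of new modal and fixpoint elements per counting subformula), are sound and consistent with what the paper does implicitly. The gap is in the accounting for counters. You count $\ff^{\maxk}$ as ``one closure element''. By the paper's definition, however, a counter is not a single proposition but a \emph{sequence of fresh propositions} encoding in binary an integer up to $\maxk(\f)$; as a contribution to $\lean(\f)$ it therefore stands for about $\log(\maxk(\f))$ fresh propositions (in Example~\ref{exa:lean}, the counter $\ff^{7}$ denotes three propositions). Since $\maxk(\f)$ sums the quantities $k+1$ over all counting subformulas and the constants $k$ appear in $\f$ in binary, $\maxk(\f)$ can be exponential in $|\f|$ as a number. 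The entire content of the lemma is the observation that the \emph{binary} coding of counters keeps the number of added propositions down to $\log(\maxk(\f))$ per counting subformula, which is absorbed by the $\log(k+1)$ terms that the definition of formula size already charges to $\f$; this is exactly the step the paper's proof carries out and your proof omits.

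The omission is not cosmetic: your argument would go through unchanged if counters were represented in unary (one proposition per value up to $\maxk$), in which case the lean would be exponentially large and the lemma false. So the one step that actually needs proof --- bounding the number of counter propositions via the binary coding and the definition of $\maxk$ --- is missing, and the proposal as written does not establish the statement for the lean as the paper defines it.
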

\begin{proof}
The proof goes by structural induction on $\f$.

We consider only the case for counting subformuals $R^{FL}(\cf{\f^\prime}{\# k},\psi)$.
Now recall that a counter is defined in terms of  a boolean combination of propositions,
that is, for each counting subformula $\cf{\f^\prime}{\# k}$ only $\log(\maxk)$ ($\maxk$ in binary)
new propositions are introduced in the lean. Since the size of $\cf{\f^\prime}{\# k}$ is
defined by $|\f^\prime|+\log(k+1)$, and by the definition of $\maxk$, the counters then produce no increment in the size of the lean. 
\end{proof}

\begin{exa} \label{exa:lean}
Consider the following formulas for  $\m\in\{\fc,\fs,\ifc,\ifs\}$:
\begin{align*}
 \f=& \cf{\left[\left(\cf{\p_1}{>1}\right) \wedge \p_2\right]}{>4}
&\ff=& \left(\cf{\p_1}{>1}\right) \wedge \p_2\\
\f_0=& \muf{\x} \ff \vee \bigvee_{\forall \m} \mf{\m}\x 
&\ff_0 =& \muf{\x} \p_1 \vee \bigvee_\m \mf{\m}\x
\end{align*}
The lean of $\f$ is thus defined as follows for $\m\in\{\fc,\fs,\ifc,\ifs\}$:
\begin{align*}
\lean(\f)=&\{\p_1,\p_2,\f,\cf{\p_1}{>1},\ff^7,\p_1^7,\ff^{>4},\p_1^{>1},
            \mf{\m}\f_0,\mf{\m}\ff_0,\p^\prime,\mf{\m}\top\}
\end{align*}
$\maxk=7$.  Now recall that $\f^7$ denote $3$ propositions that serve to express
the binary coding of the integers from $0$ to $7$.
\end{exa}

We are now ready to define the syntactic notion of tree nodes.
\begin{defi}[$\f$-Nodes]
Given a formula $\f$, a $\f$-node $\nod^\f$ is defined as a subset of $\lean(\f)$, such that:
\begin{itemize}
\item at least one proposition of $\f$ occurs;
\item if $\mf{\m}{\ff}$ occurs, then $\mf{\m}{\top}$ also does;
\item both $\mf{\ifs}{\top}$ and $\mf{\ifc}{\top}$ can not occur;
\item counting formulas are always present;
\item exactly one counter for each counting formula is present, i.e., if $\cf{\f}{\# k}\in\nod^\f$, then $\f^{k^\prime}\in\nod^\f$; 
\item counters must be consistent with counting formulas and flags, i.e.,
$\ff^{k_0},\cf{\ff}{\leq k}\in \nod$, if and only if, $k_0\leq k$, and
 $\ff^{k_0},\flag{\ff}{> k}\in \nod$, if and only if, $k_0 > k$.
\end{itemize}
\end{defi}

The set of $\f$-nodes is written $\Nd^\f$. 
If the context is clear, we often call a $\f$-node simply a node,
and we write $\nod$ instead of $\nod^\f$.

We now define trees as triples $(\nod,X_1,X_2)$, where $\nod$ is the root of the tree and $X_1$ and $X_2$ are the respective left and right subtrees.
\begin{defi}[Fischer-Ladner trees]
 Given a formula, a Fischer-Ladner tree, or simply a tree, is inductively defined as follows:
\begin{itemize}
\item the empty set $\emptyset$ is a tree;
 \item the triple $(\nod^\f,X_1,X_2)$ is also a tree, provided that $X_1$ and $X_2$ are also trees.
\end{itemize}
\end{defi}


\begin{exa}\label{exa:tree}
Consider $\f,\ff,\f_0,\ff_0$ from Example~\ref{exa:lean}.
We define the following syntactic tree model for $\f$:
\begin{align*}
T=&(\nod_0,(\nod_1,(\nod_3,\emptyset,\emptyset),(\nod_4,\emptyset,\emptyset)),
    (\nod_2,(\nod_5,\emptyset,\emptyset),(\nod_6,\emptyset,\emptyset))
\end{align*}
where
\begin{align*}
 \nod_0 =& \{\p_2,\f,\cf{\p_1}{>1},\p_1^2,\p_1^{>1},\ff^5,\ff^{>4}, 
        \mf{\fc}\ff_0,\mf{\fs}\ff_0,\mf{\fc}\f_0,\mf{\fs}\f_0,\mf{\fc}\top,\mf{\fs}\top\} \\
\nod_1 =& \{\p_2,\f,\cf{\p_1}{>1},\p_1^2,\p_1^{>1},\ff^1,\mf{\fc}\ff_0,\mf{\fs}\ff_0,
    \mf{\ifc}\ff_0,\mf{\fc}\f_0,\mf{\fs}\f_0,\mf{\ifc}\f_0,
        \mf{\fc}\top,\mf{\fs}\top,\mf{\ifc}\top\} \\
\nod_2 =& \{\p_2,\f,\cf{\p_1}{>1},\p_1^2,\p_1^{>1},\ff^3,\mf{\fc}\ff_0,\mf{\fs}\ff_0,
    \mf{\ifs}\ff_0,\mf{\fc}\f_0,\mf{\fs}\f_0,\mf{\ifs}\f_0,
\mf{\fc}\top,\mf{\fs}\top,\mf{\ifs}\top\} \\
\nod_3 =& \{\p_1,\f,\cf{\p_1}{>1},\p_1^1,\mf{\ifc}\f_0,\mf{\ifc}\ff_0,\mf{\ifc}\top\}\\
\nod_4 =& \{\p_1,\f,\cf{\p_1}{>1},\p_1^1,\mf{\ifs}\f_0,\mf{\ifs}\ff_0,\mf{\ifs}\top\}\\
\nod_5 =& \{\p_2,\f,\cf{\p_1}{>1},\ff^1,\mf{\ifc}\f_0,\mf{\ifc}\ff_0,\mf{\ifc}\top\}\\
\nod_6 =& \{\p_2,\f,\cf{\p_1}{>1},\ff^1,\mf{\ifs}\f_0,\mf{\ifs}\ff_0,\mf{\ifs}\top\}\\
\end{align*}
Figure~\ref{fig:algo} depicts  a graphical representation of $T$.
\end{exa}

\begin{figure}[t]
\begin{center}
\begin{tikzpicture}[level/.style={sibling distance=25mm/#1}, scale=0.8]
\node [circle,draw] (z){$\nod_0$}
	child {node [circle,draw] (a) {$\nod_1$} 
		child {node [circle,draw] (a1) {$\nod_3$}	
			child [grow=up,level distance = 7mm] {node (r) {$\p_1$} edge from parent[draw=none]}
		}
		child [grow=up,level distance = 7mm] {node (r) {$\p_2$} edge from parent[draw=none]}
		child {node [circle,draw] (a2) {$\nod_4$}
			child [grow=up,level distance = 7mm] {node (r) {$\p_1$} edge from parent[draw=none]}
		}
	}
	child [grow=up,level distance = 7mm] {node (r) {$\p_2$} edge from parent[draw=none]}
	child {node [circle,draw] (v) {$\nod_2$} 
		child {node [circle,draw] (a1) {$\nod_5$}	
			child [grow=up,level distance = 7mm] {node (r) {$\p_2$} edge from parent[draw=none]}
		}
		child [grow=up,level distance = 7mm] {node (r) {$\p_2$} edge from parent[draw=none]}
		child [grow=right,level distance = 33mm] {node (r) {Step $2$} edge from parent[draw=none]
			child [grow=up,level distance = 15mm] {node (r) {Step $3$} edge from parent[draw=none]}
		}
		child {node [circle,draw] (a2) {$\nod_6$}
			child [grow=up,level distance = 7mm] {node (r) {$\p_2$} edge from parent[draw=none]}
			child [grow=right] {node (r) {Step $1$} edge from parent[draw=none]}
		}
	}
;
\end{tikzpicture}
\end{center}
\caption{Fischer-Ladner tree model for $\f= \cf{\left[\left(\cf{\p_1}{>1}\right) \wedge \p_2\right]}{>4}$}
\label{fig:algo}
\end{figure}


\section{Satisfiability} \label{sec:sat}
In this Section, we introduce a satisfiability algorithm for $\mu$TLIN in the style of Fischer-Ladner \cite{BarcenasGLS11,DemriL10}.
Tree nodes are defined from an extension of the classical Fischer-Ladner closure.
The extension consists of counters (boolean combination of fresh propositions encoding integer values in binary) that are used to verify counting formulas.
Tree models are built in a bottom-up manner, that is,
starting from the leaves, parent nodes are consistently added until a witness tree for the formula in question is found.
At each step in this process, 
counters must be consistent with the counters of children nodes 
and the formulas that hold in the current parent node.



\subsection{The algorithm}

The satisfiability algorithm, described in  Algorithm~\ref{fig:satalgo}, builds candidate trees in a bottom-up manner:
iteratively, starting from leaf nodes, we check at each step if the input formula is satisfied by candidate trees,
in case the formula is not satisfied, we consistently add parents to previously built trees.
The algorithm returns $1$ if a satisfying tree is found.
In case a satisfying tree could not be found, and no more candidate trees can be built,
then the algorithm returns $0$.

\begin{exa} \label{exa:algorun}
Consider the formula $\f$ defined in Example~\ref{exa:lean}.
Then the Fischer-Ladner tree defined in Example~\ref{exa:tree} 
is built by the algorithm in $3$ steps.
In the first step, all the leaves are considered, that is,
nodes without children, such that the counters are properly initialized
(Definition~\ref{def:leaves}).
It is then easy to see that $\nod_3,\nod_4,\nod_5,\nod_6$ are all leaves.
Since $\p_1$ is occurring in both, $\nod_3$ and $\nod_4$,
then the counter $\p_1^1$ is also in the same nodes.
Since both $\p_2$ and $\cf{\p_1}{>1}$ are in $\nod_5$ and $\nod_6$,
then $\ff=\p_2\wedge\cf{\p_1}{>1}$ is true in both nodes, 
and consequently $\ff^1$ is also in $\nod_5$ and $\nod_6$.
However, none of the leaves satisfies $\f$, then, in the second step, 
$\nod_1$ is added as parent to both $\nod_3$ and $\nod_4$.
$\nod_2$ is also added as parent to $\nod_5$ and $\nod_6$.
Since $\ff$ is true in $\nod_1$ and $\nod_2$,
 then the counter for $\ff$ is incremented in both nodes.
Resulting that in $\nod_1$ we have $\ff^1$, and in $\nod_2$ we have $\ff^3$.
However, none of the trees built in step 2 satisfies $\f$.
In step $3$, $\nod_0$ is then added as parent of $\nod_1$ and $\nod_2$.
Since $\ff$ holds in $\nod_0$, then we update the counter to $\ff^5$,
and $\f$ is then finally satisfied.
This process is depicted in Figure~\ref{fig:algo}.
\end{exa}

\begin{algorithm}[t]
\begin{algorithmic}
\STATE $Y \gets \Nd^\f$
\STATE $\mathcal{X} \gets Leaves(Y)$
\STATE $\mathcal{X}_0 \gets \emptyset$
\WHILE{$\mathcal{X}\neq \mathcal{X}_0$}
\IF {$\mathcal{X}\Vdash \f$}
\RETURN $1$
\ENDIF
\STATE $\mathcal{X}_0\gets \mathcal{X}$
\STATE $(\mathcal{X},Y) \gets Update(\mathcal{X},Y)$
\ENDWHILE
\RETURN $0$
\end{algorithmic}
\caption{Satisfiability Algorithm} \label{fig:satalgo}
\end{algorithm}

We now provide a precise description of the algorithm components.

If a tree $T$ is a model for a formula $\f$, it is said that $T$ satisfies (entails) $\f$.
We now give a precise definition of this entailment relation.
\begin{defi}
The entailment of a formula by a node is defined by:
\begin{align*}
& \frac{}{\nod\vdash \top} && \frac{\f\in\nod}{\nod\vdash \f} && \frac{\f\not\in\nod}{\nod\vdash\neg\f}
&& \frac{\nod\vdash\f \;\;\;\;\; \nod\vdash \ff}{\nod\vdash \f \wedge \ff}\\
& \frac{\nod\vdash \f}{\nod\vdash \f\vee\ff} && \frac{\nod\vdash \ff }{\nod\vdash \f \vee \ff} 
&& \frac{\nod\vdash \f\subst{\muf{\x}{\f}}{\x}}{\nod\vdash\muf{\x}{\f}}
\end{align*}
The entailment relation is now extended for trees and formulas.
 A formula $\f$ is satisfied by a tree $X$, written $X\Vdash \f$,
if and only if,
\begin{itemize}
\item there is a node $\nod$ in $X$, such that $\nod\vdash \f$;
\item formulas of the forms $\mf{\ifc}{\ff}$ and $\mf{\ifs}{\ff}$ do not occur in the root of $X$; and
\item all the flags are in the root. 
\end{itemize}
A set of trees $\mathcal{X}$ entails a formula $\f$, written $\mathcal{X}\Vdash \f$, 
if and only if, there is a tree $X$ in 
$\mathcal{X}$ s.t. $X\Vdash \f$.

The relation $\not\Vdash$ is defined as expected.
\end{defi}

The sef of leaves contains nodes without children. In the leaves, counters are also properly initialized.
\begin{defi}[Leaves] \label{def:leaves}
Given set of nodes $X$, the set of leaves  is defined as follows:
\begin{align*}
Leaves(X)=&\left \{(\nod,\emptyset,\emptyset) \mid \nod\in X, \mf{\fc}{\f},\mf{\fs}{\f}\not\in \nod,
  \left[(\f^1\in \nod, \nod\vdash \f) \text{ or } (\f^0 \in \nod, \nod\not\vdash \f)\right]
\right \}
\end{align*}
\end{defi}

Recall that counting formulas are true in the entire model when satified, then
counting formulas are always present in every $\f$-node.  
The corresponding counters will be updated each time they find a witness. 
Notice that counting subformulas with the form $\cf{\ff}{> k}$ may not be true at earlier steps of the algorithm.
We then use  flags  to identify when those formulas become true, that is, when we find more than $k$ witnesses of $\ff$, we then turn on the flag $\flag{\ff}{> k}$.
Once a flag is turned on, it is copied to  parents  at  each further step. It is then required to have all the flags in the root in order to ensure that counting subformulas $\cf{\ff}{> k}$ are all satisfied.

For the step case in the algorithm, if  newly  built trees do not satisfy the formula, then new candidate trees are constructed by adding a parent to previously built trees. 
This is done by the $Update$ function, which is defined with the help of the following auxiliary functions.

A node $\nod$ containing a modal formula $\mf{\m}{\ff}$ can be linked to another node $\nod^\prime$ through a modality $\m$, if and only if, there is a witness of $\ff$ in $\nod^\prime$, that is, $\nod^\prime\vdash \ff$. 
This notion is defined by the relation $\Delta_\m$.
\begin{defi} \label{def:modalcon}
Given two nodes $\nod_1$, $\nod_2$ and formula $\f$, we say that 
the nodes are modally consistent with respect to the formula $\Delta_\m (\nod_1,\nod_2)$ for $\m\in\{\fc,\fs\}$, if and only if, for all formulas $\mf{\m}{\ff_1},\mf{\dual{\m}}{\ff_2}\in\lean(\f)$,
we have that:
\begin{itemize}
\item $\mf{\m}{\ff_1}\in \nod_1$ if and only if $\nod_2\vdash \ff_1$, and
\item $\mf{\dual{\m}}{\ff_2}\in \nod_2$ if and only if $\nod_1\vdash \ff_2$.
\end{itemize}
\end{defi}
\begin{exa}
Consider the algorithm execution described in Example~\ref{exa:algorun}.
In the second step, when linking $\nod_1$ with $\nod_3$ and $\nod_4$, note that
$\Delta_\fc(\nod_1,\nod_3)$ and $\Delta_\fs(\nod_1,\nod_4)$. 
This is because  $\f_0$ and $\ff_0$ are both true in $\nod_3$ and $\nod_4$,
that is, $\nod_3\vdash \f_0$, $\nod_3\vdash \ff_0$, $\nod_4\vdash \f_0$,
and $\nod_4\vdash \ff_0$.
\end{exa}

When adding parents, it is also necessary to ensure that counting formulas are satisfied.
Recall that, according to the definition of $\f$-nodes, counting formulas and flags are consistent with  counters.
It is then only required to update the counters and to copy the flags that are already in the subtrees.
We have two cases. The first one is when we add a parent to both, a left and a right subtrees. The second case is  when a parent is added to one subtree only.
Consider the first case.
\begin{defi} \label{def:countercon}
 It is said that three nodes $\nod_0,\nod_1,\nod_2$   are consistent with respect to their counters, denoted by $\#(\nod_0,\nod_1,\nod_2)$, 
if and only if,
\begin{itemize}
\item $\ff^{k_0}\in\nod_0$ and $\nod_0\vdash\ff$, if and only if, $\ff^{k_1}\in\nod_1$, $\ff^{k_2}\in\nod_2$ and $k_0=k_1+k_2+1$ if $k_0\leq \maxk$, otherwise $k_0=\maxk$; 
\item $\ff^{k_0}\in\nod_0$ and $\nod_0\not\vdash\ff$, if and only if,  $\ff^{k_1}\in\nod_1$, $\ff^{k_2}\in\nod_2$ and $k_0=k_1+k_2$ if $k_0\leq \maxk$, otherwise $k_0=\maxk$; and
\item if $\flag{\ff}{>k}\in \nod_i$ for any $i\in\{1,2\}$, then $\flag{\ff}{>k}\in \nod_0$.
\end{itemize} 
\end{defi}
The second case ($\#(\nod_0,\nod_i)$) is defined in an analogous manner.
\begin{exa}
Consider again the execution described in Example~\ref{exa:algorun}.
Since $\ff^1\in \nod_1$, $\ff^3\in \nod_2$ and $\nod_0\vdash \ff$,
it is then consistent that $\ff^5\in \nod_0$, and hence $\#(\nod_0,\nod_1,\nod_2)$.
\end{exa}

Recall that the $Update$ function is used to consistently add parents to previously built trees.
Now, with the notions of modal and counter consistency (Definitions~\ref{def:modalcon} and \ref{def:countercon}) already defined,
we are now ready to give a precise description of the $Update$ function.
\begin{defi}
Given a set of trees $\mathcal{X}$ and a set of nodes $Y$, 
the function $Update(\mathcal{X},Y)$ is defined as the tuple $(\mathcal{X}^\prime,Y^\prime)$, such that:
\begin{itemize}
\item $\mathcal{X}^\prime=\{(\nod,X_\fc,X_\fs) \mid \nod\in Y,X_i\in\mathcal{X},\Delta_i(\nod,\nod_i),\#(\nod,\nod_1,\nod_2)\}$,
where $i=\fc,\fs$ and $\nod_i$ is the root of $X_i$; or 
\item $\mathcal{X}^\prime=\{(\nod,X_\fc,X_\fs) \mid \nod\in Y,X_i\in\mathcal{X},\Delta_i(\nod,\nod_i),\#(\nod,\nod_i)\}$ in
case  $X_j=\emptyset$ with $i\neq j$; and 
\item $Y^\prime=Y\setminus\{\nod\}$.
\end{itemize} 
\end{defi}

We now prove that the algorithm is correct. 
We also describe a single exponential bound in the time complexity of the algorithm.

\subsection{Correctness and Complexity} \label{sec-corre}

It is easy to see that the algorithm has a finite number of steps if we notice that the number of nodes is finite and that the $Update$ function is monotone.

In order to show that the algorithm is correct, we then prove it to be sound and complete.

\begin{thm}[Soundness]
If the algorithm returns $1$ for the input formula $\f$, then there is tree model satisfying $\f$.
\end{thm}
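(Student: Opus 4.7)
The plan is to show that whenever the algorithm returns $1$, the constructed syntactic Fischer-Ladner tree can be turned into a genuine tree model of $\f$ in the sense of Definition~\ref{def:sem}. Suppose the algorithm returns $1$; then at some iteration there is a Fischer-Ladner tree $T=(\nod_r,X_1,X_2)$ in the current set $\mathcal{X}$ such that $T\Vdash\f$. I would first define a semantic tree $T^\prime=(\Prop,\Nodes,\Rel,\Lab)$ by taking as $\Nodes$ the underlying nodes of $T$, setting $\p\in\Lab(\nod)$ iff $\p\in\nod$, and placing $\nod^\prime\in\Rel(\nod,\fc)$ (resp.\ $\fs$) iff $\nod^\prime$ is the root of the left (resp.\ right) subtree attached to $\nod$ in $T$; the inverse relations are read off dually. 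The restrictions on $\f$-nodes (e.g.\ that $\mf{\ifc}\top$ and $\mf{\ifs}\top$ may not both appear) guarantee that $T^\prime$ is a well-formed tree structure.

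The core claim I would prove, by induction on the structure of $\ff\in\flc(\f)$ and a subsidiary induction on tree height, is a \emph{truth lemma}:
\[
\text{for every node }\nod\in\Nodes,\quad \nod\vdash\ff \iff \nod\in\sem{\ff}{T^\prime}{V}.
\]
The propositional, boolean and fixed-point cases are routine, using that $\R$ unfolds $\muf{\x}{\f}$ into $\f\subst{\muf{\x}{\f}}{\x}$. For the modal case $\mf{\m}{\ff}$, the equivalence follows directly from modal consistency $\Delta_\m$ in Definition~\ref{def:modalcon}, which is imposed whenever $Update$ links a parent to a child; in particular the reverse direction requires that no modal formula is ``hallucinated,'' which is exactly the biconditional in $\Delta_\m$.

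The main obstacle is the counting case $\cf{\ff}{\# k}$. Here I would prove a stronger auxiliary invariant: for every subtree $S$ rooted at a node $\nod$ appearing at some stage of the algorithm, the counter $\ff^{k_0}$ attached to $\nod$ satisfies $k_0=\min(\maxk,\,|\{\nod^\prime\in S\mid \nod^\prime\vdash\ff\}|)$, and moreover $\flag{\ff}{>k}\in\nod$ iff this count already exceeds $k$. The base of the induction is the initialization of counters on leaves in Definition~\ref{def:leaves}; the step is exactly the arithmetic in Definition~\ref{def:countercon} for both the binary join ($\#(\nod_0,\nod_1,\nod_2)$) and the unary one. Since the algorithm only accepts when all flags are present at the root, and since the root of $T$ has no proper ancestor, the invariant at the root gives that $\cf{\ff}{>k}$ holds globally in $T^\prime$ iff $\flag{\ff}{>k}$ is in the root, matching the global semantics of $\cf{\ff}{>k}$ in Definition~\ref{def:sem}.

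Having established the truth lemma, the witnessing node $\nod\in T$ with $\nod\vdash\f$ (guaranteed by $T\Vdash\f$) satisfies $\nod\in\sem{\f}{T^\prime}{V}$, so $T^\prime\models\f$ and $\f$ is satisfiable. The trickiest bookkeeping, and where I expect the argument to demand the most care, is synchronizing the bottom-up counter updates with the global semantics of counting formulas: the saturation at $\maxk$ must not cause the counter to disagree with a flag, which is why the consistency clause in the definition of $\f$-nodes forces $\ff^{k_0}\in\nod$ together with $\flag{\ff}{>k}$ exactly when $k_0>k$.
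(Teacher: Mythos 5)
Your proposal follows essentially the same route as the paper: build the Kripke tree $T^\prime$ from the accepted Fischer-Ladner tree by reading off $\Rel$ and $\Lab$ from the triples and the node labels, then argue by structural induction that the witnessing node satisfies $\f$. The paper dismisses the inductive cases as straightforward (commenting only on the finite unfolding of fixed points), whereas you explicitly work out the counting case via the counter/flag invariant maintained by $Leaves$ and $\#$; this is a welcome elaboration of the same argument rather than a different one.
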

\begin{proof}
By assumption,  there is a triple $X$ such that $X\Vdash \f$.
We will now construct a tree model $T$ from $X$.
\begin{itemize}
\item The set of propositions $\Prop$ are the ones in  $\lean(\f)$.
\item The nodes of $T$ are $\Nodes^\f$.
\item We now define the edges of $T$.
For every triple $(\nod,X_1,X_2)$ of $X$, we define $\Rel(\nod,\fc)=\nod_1$ and   $\Rel(\nod,\fs)=\nod_2$, provided that
$\nod_1$ and $\nod_2$ are the respective roots of $X_1$ and $X_2$.
\item We label the nodes in the obvious manner: if $\p\in\nod$, then $\p\in\Lab(\nod)$.
\end{itemize}
It is now shown by structural induction on $\f$ that $T$ satisfies $\f$.
All cases are straightforward. For the case of fixed-point subformulas, recall that there is an equivalent finite unfolding, that is:	$\muf{\x}{\ff} \equiv \f\subst{\muf{\x}{\f}}{\x}$ \cite{BonattiLMV06,BarcenasGLS11}.
\end{proof}

For completeness it is assumed that there is a satisfying tree $T$ for the formula $\f$, and then it is shown that the algorithm returns $1$. The proof comes in two steps: we first construct an equivalent lean labeled version of $T$, and then we show that the algorithm can actually construct such lean labeled  tree.

\begin{defi}
Given a satisfying tree $T$ of a formula $\f$, we define its lean version $X^T$ as follows:
\begin{itemize}
\item $X^T$ has the same nodes and shape than $T$;
\item each node $\nod$ in $X^T$ is labeled with the formulas $\ff$ in $\lean(\f)$ such that
\begin{itemize}
\item $\nod$ in $T$ satisfies $\ff$, and
\item the labels corresponding to the counters are pinned up in a similar manner as the algorithm does, that is,
in an increasing order (with bound $\maxk$) from bottom-up in the tree.  
\end{itemize}
\end{itemize}
\end{defi}

\begin{lemma}
If a tree $T$ satisfies a formula $\f$, then $\f$ is entailed by $X^T$.
\end{lemma}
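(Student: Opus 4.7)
The plan is to reduce the semantic satisfaction $T \models \f$ to the syntactic notion $X^T \Vdash \f$ by first proving a pointwise lemma that links the semantic denotation of every subformula to the entailment relation on the nodes of $X^T$, and then discharging the three global side-conditions of $\Vdash$. The pointwise lemma to aim for is: for every $\ff$ in the negation-normal-form Fischer--Ladner closure of $\f$ and every node $\nod$ of $T$, the inclusion $\nod \in \sem{\ff}{T}{V}$ implies $\nod \vdash \ff$ in $X^T$.

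I would establish this lemma by structural induction on $\ff$. The base cases for propositions $p$ and their negations follow immediately from the labeling of $X^T$, which by definition matches semantic truth for lean elements. Disjunctions and (via de Morgan) conjunctions reduce to the induction hypothesis using the corresponding entailment rules. For modal subformulas $\mf{\m}{\ff'}$, the key observation is that $\mf{\m}{\ff'} \in \lean(\f)$; hence by construction of $X^T$ it is placed in $\nod$ precisely when $\nod \in \sem{\mf{\m}{\ff'}}{T}{V}$, and $\nod \vdash \mf{\m}{\ff'}$ follows by the lookup rule. For a least fixed point $\muf{\x}{\ff'}$, I would exploit the finiteness of $T$: the semantics coincides with a finite approximation, so finitely many applications of the unfolding rule for $\mu$, together with the outer induction hypothesis applied to $\ff'$, yield the desired derivation.

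The delicate case is the counting subformula $\cf{\ff'}{\# k}$. Because counting is global, $\cf{\ff'}{\# k}$ either holds at every node of $T$ or at none, so the labeling of $X^T$ is uniform on this component and entailment by lookup is straightforward. The real work is to verify the $\f$-node compatibility constraints that tie the counter $\ff'^{k'}$, the flag $\flag{\ff'}{> k}$, and the counting subformula to each other. For this I would run a secondary induction on the subtree rooted at $\nod$, maintaining the invariant that $k' = \min(\maxk, |\{ \nod'' \text{ in the subtree of } \nod : \nod'' \in \sem{\ff'}{T}{V}\}|)$. The inductive step uses the outer hypothesis applied to $\ff'$ to decide whether $\nod$ itself contributes one to the count, and the bottom-up definition of $X^T$ mirrors exactly the update rule used by the algorithm, so the counter relationship carries over. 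Once this invariant holds, the remaining constraints from the $\f$-node definition (consistency of counters with counting subformulas and with flags) are automatic.

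Finally I would discharge the three global conditions for $X^T \Vdash \f$. A node witness for $\f$ exists because $T \models \f$ supplies some $\nod \in \sem{\f}{T}{V}$, on which the pointwise claim delivers $\nod \vdash \f$. Since $T$ is a genuine tree the root has no parent and no previous sibling, so no $\mf{\ifc}{\ff'}$ or $\mf{\ifs}{\ff'}$ is semantically true at the root and so none appears in the root of $X^T$. For the flag condition, the counter invariant established above implies that the root's counter dominates the counter of every subtree node (up to $\maxk$-saturation); hence any flag $\flag{\ff'}{> k}$ appearing anywhere in $X^T$ must also be present at the root. The main obstacle I anticipate is the counting case: reconciling the global semantics of counting formulas, the local bottom-up counter bookkeeping, and the $\maxk$-saturation requires carefully nesting the subsidiary tree-induction inside the outer formula-induction, whereas everything else is essentially unpacking definitions.
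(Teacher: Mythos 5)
Your proposal is correct and follows essentially the same route as the paper: a structural induction on the formula that reduces semantic truth at a node of $T$ to entailment at the corresponding node of $X^T$, using the fact that lean formulas are placed in $X^T$ exactly where they hold semantically, with the fixpoint case handled by finite unfolding. You spell out considerably more detail than the paper does — in particular the counter/flag invariant for the counting case and the discharge of the three side conditions of $\Vdash$ — but these are elaborations of the same argument, not a different one.
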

\begin{proof}
We proceed by induction on the derivation of $\nod\vdash\f$.
Most cases are immediate by induction and the construction of $X^T$.

For the fixpoint case $\muf{\x}{\ff}$, we test $\ff\subst{\muf{\x}{\ff}}{\x}$.
We then proceed by structural induction again. This is also straightforward since 
variables, and hence unfolded fixed-points, can only occur in the scope of a modality or a counting formula.
\end{proof}

One crucial point in the completeness proof is to show that $N^\f$ contains enough nodes to satisfy $\f$.
It is well-known that the standard Fischer-Ladner construction of models provides the required amount of nodes for simple $\mu$-calculus formulas without counting \cite{BonattiLMV06}. 
Since counting subformulas impose bounds on the number of certain nodes, 
it may be required to duplicate $\f$-nodes.
Counters are then introduced in the Fischer-Ladner construction in order to distinguish  potentially identical nodes.
 We now show that counters are introduced in a consistent manner.

\begin{lemma}\label{perro}
Given a satisfying tree $T$ of a formula $\f$, there is a tree entailing $\f$, such that for every path from its root to a leave, there  are not identical $\f$-nodes.
\end{lemma}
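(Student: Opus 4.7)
The plan is to perform a pumping-style contraction on the lean-labeled tree $X^T$ associated with $T$, which by the preceding lemma entails $\f$. I would iteratively shrink $X^T$ by contracting duplicate $\f$-nodes on root-to-leaf paths: whenever $\nod_1$ is a proper ancestor of $\nod_2$ on some root-to-leaf path with $\nod_1 = \nod_2$ as subsets of $\lean(\f)$, I would replace the subtree of $X^T$ rooted at $\nod_1$ by the subtree rooted at $\nod_2$. Each contraction strictly decreases the size of the tree, so the process terminates after finitely many steps, producing a tree with no duplicate $\f$-nodes on any root-to-leaf path.

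At each step I would verify that the contracted structure is still a valid Fischer-Ladner tree. The only new parent-child relation is between $\nod_1$'s former parent $\nod'$ and the node $\nod_2$ placed at $\nod_1$'s former position. Modal consistency $\Delta_\m(\nod',\nod_2)$ follows from $\Delta_\m(\nod',\nod_1)$ because $\nod_1$ and $\nod_2$ share all modal labels, and counter consistency $\#(\nod',\ldots)$ follows because they share all counter labels. All consistency conditions below $\nod_2$ are inherited directly from $X^T$.

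I would also verify that the contracted tree continues to entail $\f$. The root is untouched throughout, so the flag condition and the absence of $\mf{\ifc}$- and $\mf{\ifs}$-formulas at the root are preserved. For the witness condition, observe that entailment $\nod\vdash\f$ depends only on $\nod$'s labels, so any node surviving contraction derives exactly the formulas it derived in $X^T$. I would fix a witness $\nod^*$ with $\f\in\nod^*$ in $X^T$ and always choose the contraction pair $(\nod_1,\nod_2)$ so as not to remove $\nod^*$; when $\nod^*=\nod_1$ as an $\f$-node, the transplanted $\nod_2$ carries the same labels and serves as the new witness.

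The main obstacle is the case in which the only duplicate available on some root-to-leaf path strictly straddles $\nod^*$. I would handle this by ordering the contractions: first eliminate all duplicates lying entirely above $\nod^*$ and entirely below $\nod^*$, after which the two sub-paths from the root to $\nod^*$ and from $\nod^*$ to a leaf consist of pairwise distinct $\f$-nodes and each has length at most $|\Nd^\f|$. A final pigeonhole argument on the resulting bounded-depth path, together with the constraint that counters must agree across a duplicate pair, then forces any surviving straddling duplicate to share its label with $\nod^*$, so the contraction can proceed safely while keeping a witness of $\f$ alive.
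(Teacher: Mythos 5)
Your core construction coincides with the paper's: starting from the lean-labelled tree $X^T$, repeatedly contract a duplicate pair $(\nod_1,\nod_2)$ on a root-to-leaf path by grafting the lower occurrence onto the position of the upper one; modal consistency at the splice holds because the two nodes carry the same modal labels, and the counting bookkeeping survives because they carry the same counters and flags --- which is precisely the paper's key remark that no counted witness of any $\cf{\ff}{>k}$ can lie on the removed segment. Up to that point your argument is sound and is essentially the paper's proof.

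The gap is in your final step. You rightly notice a subtlety the paper passes over --- the contraction may delete the node $\nod^*$ with $\nod^*\vdash\f$ when the only remaining duplicate pair straddles it --- but your repair does not close it. After eliminating all duplicates lying entirely above and entirely below $\nod^*$, a straddling pair $a_i=b_j$ with $a_i$ above $\nod^*$ and $b_j$ below it can still occur, and neither the pigeonhole bound on the path length nor the agreement of $a_i$ and $b_j$ on their counters yields any relation between the label set of $a_i$ and that of $\nod^*$: counter agreement only says that no counted formula has a witness strictly between $a_i$ and $b_j$, and says nothing about which lean formulas $\nod^*$ satisfies. So the claim that any surviving straddling duplicate must share its label with $\nod^*$ is unjustified, and with it the claim that a witness of $\f$ survives the last contraction. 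To finish you need a separate argument for this case --- for example, showing that after such a contraction some surviving node still entails $\f$, or arranging the contractions so that $\nod^*$ is never interior to a removed segment; as stated, the step would fail.
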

\begin{proof}
If every path in $X^T$ does not contain identical nodes, then we are done.

Consider now the case when we have two identical nodes $\nod_1$ and $\nod_2$ in a path of $X^T$.
Without loss of generality, we assume that $\nod_1$ is above $\nod_2$. 
We then proceed to build a tree $X$ from $X^T$, such that $\nod_2$ is grafted upon $\nod_1$.
That is, the path between $\nod_1$ and $\nod_2$ is removed, not including $\nod_1$ but including $\nod_2$.
$\nod_1$ is then linked to the subtrees of $\nod_2$.
$X$ can then be seen as the pruned version of $X^T$.

We now show that $X$ also entails $\f$ by induction on the derivation of $X\vdash \f$.
Most cases are immediate by the construction of $X$ and by induction.


Consider now the case of counting subformulas.
Since these subformulas are true in every node, then the only important thing is to be sure that the counted nodes are not part of the pruned path. This is not possible since the counters in $\nod_2$ are the same than the ones in $\nod_1$, that is,
the counters are not increased between $\nod_1$ and $\nod_2$.
\end{proof}


\begin{thm}[Completeness]
If a formula $\f$ is satisfiable, then the algorithm returns $1$.
\end{thm}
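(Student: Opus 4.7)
The plan is to start from a satisfying tree $T$ for $\f$, pass to the lean-labeled version $X^T$, prune it via Lemma~\ref{perro} to obtain a tree $X$ with no repeated $\f$-nodes along any root-to-leaf path while $X\Vdash\f$, and then show that the bottom-up construction performed by the algorithm eventually produces $X$ (hence the algorithm returns $1$). The main induction is on the height of subtrees of $X$: I claim that every subtree of $X$ belongs to $\mathcal{X}$ at some iteration of the while loop.

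For the base case, I would show that each leaf triple $(\nod,\emptyset,\emptyset)$ of $X$ lies in $Leaves(\Nd^\f)$. By construction of $X^T$, a leaf $\nod$ of $X$ contains no formulas $\mf{\fc}{\f},\mf{\fs}{\f}$ (those would require successors in $T$), and its counter label $\f^{k}$ is initialized according to whether $\nod\vdash\f$ holds, precisely matching Definition~\ref{def:leaves}. For the inductive step, suppose a node $\nod$ of $X$ has subtrees $X_\fc,X_\fs$ already present in $\mathcal{X}$ at some iteration. I must verify that $(\nod,X_\fc,X_\fs)$ is produced by $Update$, which amounts to checking the modal consistency $\Delta_\m(\nod,\nod_i)$ and the counter consistency $\#(\nod,\nod_\fc,\nod_\fs)$ (or its one-sided variant). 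Modal consistency is immediate from the lean labeling: $\mf{\m}{\ff}\in \nod$ iff the successor through $\m$ satisfies $\ff$ in $T$, which by induction and the semantic-vs-syntactic entailment lemma coincides with $\nod_i\vdash\ff$. Counter consistency follows from the fact that, by construction of $X^T$ and the pruning in Lemma~\ref{perro}, counters are pinned up bottom-up in increasing order capped at $\maxk$, so the arithmetic relation $k_0=k_1+k_2+[\nod\vdash \ff]$ holds, and flags are propagated upward exactly as required.

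Once the root of $X$ has been added to $\mathcal{X}$, the test $\mathcal{X}\Vdash\f$ succeeds because $X\Vdash\f$ by choice of $X$: the flags accumulated at the root guarantee that every counting subformula $\cf{\ff}{>k}$ that is true globally has its flag set, and no inverse modalities appear at the root since $X$ was obtained from a genuine tree model. Termination of the process within the bound of the algorithm's loop follows because $\Nd^\f$ is finite and $Update$ strictly removes elements from $Y$ until either a model is found or no further expansion is possible.

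The main obstacle I anticipate is the counter-consistency step, and this is precisely why the pruning of Lemma~\ref{perro} is essential: without it, a node occurring twice on a path in $X^T$ could carry the same counter label at two different depths, creating a mismatch between the locally required increment at an internal node and the actual values obtained by summing children counters. After pruning, every path strictly increases some counter whenever a witness for $\ff$ appears, so the inductive verification of $\#(\nod,\nod_\fc,\nod_\fs)$ goes through. The complexity bound then follows from Lemma~\ref{lem-lean}, since the number of nodes is $2^{|\lean(\f)|}$, yielding the single exponential bound advertised in Section~\ref{sec:sat}.
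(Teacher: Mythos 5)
Your proposal follows essentially the same route as the paper's own proof: invoke Lemma~\ref{perro} to obtain a duplicate-free Fischer--Ladner tree entailing $\f$ with all nodes in $\Nd^\f$, then induct on its height, showing leaves land in $Leaves(\Nd^\f)$ and that $Update$ reproduces each internal triple via the $\Delta_\m$ and $\#$ consistency checks. If anything, your write-up supplies more detail than the paper (notably the explicit verification of counter consistency and of the flag/root conditions in the final entailment test), but the decomposition and key lemma are identical.
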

\begin{proof}
By assumption, there is a (Kripke) tree $T$ satisfying $\f$.
By Lemma~\ref{perro}, we know there is a Fischer-Ladner tree $X^T$, obtained from $T$, entailing $\f$,  and whose nodes are all in $\Nd^\f$.
In order to show that $X^T$ is produced by the algorithm,
we now proceed by induction on the height of $X^T$.

The base case is immediate. 

For the induction step, we know that the right and left subtrees of $X^T$, say $X_\fc$ and
$X_\fs$, are already produced by the algorithm, that is, $X_\fc,X_\fs\in \mathcal{X}$.
In order to show that $Update(\mathcal{X},Y)=(\mathcal{X}^\prime,Y^\prime)$,
such that $X^T \in\mathcal{X}^\prime$, please note
 that $\Delta_\fc(n,X_\fc)$ and $\Delta(n,X_\fs)$, where $n$ is the root of $X$.
The fact that $n\in Y$ comes from the consistency of $\maxk$ with respect to
satisfaction of $\f$, which is easily proved by an immediate induction
on the structure of $\f$.
\end{proof}
 
As in \cite{BarcenasGLS11, CalvaneseGLV10, GenevesLS07}, the time complexity of the satisfiability algorithm is single exponential on the
number of nodes (automaton states) introduced by the Fischer-Ladner construction.

\begin{thm}[Complexity]
$\mu$TLIN satisfiability is  EXPTIME-complete.
\end{thm}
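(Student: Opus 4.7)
The plan is to establish both an EXPTIME upper bound via analysis of Algorithm~\ref{fig:satalgo} and a matching EXPTIME lower bound by subsumption of an already EXPTIME-hard logic.

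For the upper bound, the key leverage is Lemma~\ref{lem-lean}: the cardinality of $\lean(\f)$ is linear in $|\f|$, so the set $\Nd^\f$ of $\f$-nodes (which are certain subsets of $\lean(\f)$ satisfying the local consistency conditions from its definition) is bounded in size by $2^{O(|\f|)}$. First I would observe that the outer while loop terminates after at most $|\Nd^\f|$ iterations: the procedure $Update$ is monotone in $\mathcal{X}$ and strictly shrinks the pool $Y$ whenever new trees are produced, and the stopping test $\mathcal{X} = \mathcal{X}_0$ detects saturation. Next I would argue that each iteration runs in time polynomial in $|\Nd^\f|$: the entailment test $\mathcal{X} \Vdash \f$ is decided by walking the syntax tree of $\f$ through the node labels, which is polynomial in $|\f| \cdot |\mathcal{X}|$; and one application of $Update$ reduces to checking, for every triple $(\nod, X_\fc, X_\fs)$ with $\nod \in Y$ and $X_\fc, X_\fs \in \mathcal{X}$, the modal consistency predicates $\Delta_\fc, \Delta_\fs$ of Definition~\ref{def:modalcon} and the counter consistency $\#$ of Definition~\ref{def:countercon}. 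The latter is polynomial because counters are stored as binary strings of length $O(|\f|)$, so addition with saturation at $\maxk$ costs polynomial time. Multiplying an exponential number of iterations by polynomial work per iteration gives a $2^{O(|\f|)}$ overall bound, which together with correctness (proved just above) yields EXPTIME membership.

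For the lower bound, I would use that $\mu$TLIN syntactically subsumes the two-way $\mu$-calculus on finite trees: every formula of the latter is already a $\mu$TLIN formula (taking the empty fragment of counting constructs), and satisfaction coincides under Definition~\ref{def:sem}. Since two-way $\mu$-calculus satisfiability on trees is EXPTIME-hard (as noted in the discussion around Theorem~\ref{theo:expressiveness} and in \cite{BonattiLMV06,BarcenasGLS11}), EXPTIME-hardness of $\mu$TLIN is inherited.

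The main obstacle will be pinning down the precise invariant that keeps the outer loop of the algorithm within $2^{O(|\f|)}$ iterations while also ensuring that $\mathcal{X}$ never grows beyond exponential cardinality. At first sight $\mathcal{X}$ is a set of structured trees, which could a priori be exponentially deep; the resolution is to note that only the root labels of elements of $\mathcal{X}$ matter for future applications of $Update$ (through $\Delta_\m$ and $\#$), so $\mathcal{X}$ can be represented by its set of root labels, bounded by $|\Nd^\f| = 2^{O(|\f|)}$. Making this representation explicit, and verifying that all the consistency checks carry over to it, is the technical heart of the complexity argument; once done, completeness via Lemma~\ref{perro} guarantees that no satisfying tree is lost by this identification, and the EXPTIME bound follows.
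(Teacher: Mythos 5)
Your proposal is correct and follows essentially the same route as the paper: an EXPTIME upper bound by bounding $|\Nd^\f|$ exponentially via Lemma~\ref{lem-lean} and checking that each phase of Algorithm~\ref{fig:satalgo} (the entailment test, $Leaves$, and $Update$ with its $\Delta_\m$ and $\#$ checks) costs at most exponential time over at most exponentially many iterations, combined with EXPTIME-hardness inherited from the counting-free fragment (the paper phrases this as encoding all finite tree automata, you as subsuming the two-way $\mu$-calculus — the same fact). Your additional care about representing $\mathcal{X}$ by root labels is a reasonable tightening of a point the paper leaves implicit, but it does not change the argument.
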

\begin{proof}
By Lemma~\ref{lem-lean}, the size of the lean is at most polynomial 
with respect to the formula size.
We then show that the complexity of the algorithm is at most exponential 
with respect to the lean size.

First notice that the size of $\Nd^\f$ is exponentially bounded by the lean size.
Then, in the loop there is  at most an exponential number of steps.

Computing the set $Leaves$ takes exponential time since $\Nd^\f$ is traversed once.

Now note  that testing  the relation $\vdash$ costs linear time with respect to the size of the node. Then the entailments  $\Vdash$ and $\not\Vdash$ take  at the most exponential time .

The $Update$ function costs at the most exponential time by the following facts: traversals on $\mathcal{X}$ and $Y$ take exponential time; and the costs of the relations $\Delta$ and $\#$ are linear.
Since each step in the loop  takes at the most exponential time, we conclude that the overall complexity is single exponential.

Finally, since $\mu$TLIN can encode all finite tree automata and is closed under negation, satisfiability is hard for EXPTIME,
and hence complete.
\end{proof}

 Recall that regular path queries  (XPath) and regular tree expressions (XML schemas), 
extended with counting contructs, can be encoded in terms of the logical formulas with linear size with respect to the original queries and types (Theorems~\ref{theo-paths} and \ref{theo-types}).
We can then conclude that the logic can  be used as an optimal query reasoning framework 
for XML trees.

\begin{cor}
The emptiness, containment and equivalence of CPath queries and CTypes  are decidable in EXPTIME.
\end{cor}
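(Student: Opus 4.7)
The plan is a direct combination of the two linear embeddings from Theorems~\ref{theo-paths} and \ref{theo-types} with the EXPTIME satisfiability result for $\mu$TLIN just established. All three reasoning problems — emptiness, containment and equivalence — can be reduced to (un)satisfiability of a $\mu$TLIN formula of size linear in the input, so running the $\mu$TLIN satisfiability algorithm on the translation gives the claimed upper bound.

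Concretely, for a CPath query $\rho$, Theorem~\ref{theo-paths} gives that $\sem{\rho}{T}{}=\emptyset$ for every $T$ iff $F(\rho,\top)$ is unsatisfiable, and $|F(\rho,\top)|$ is linear in $|\rho|$. Since $\mu$TLIN is closed under negation and its satisfiability is in EXPTIME, emptiness of $\rho$ is in EXPTIME. Containment $\sem{\rho_1}{T}{}\subseteq\sem{\rho_2}{T}{}$ is handled identically by Theorem~\ref{theo-paths}: it is equivalent to unsatisfiability of $F(\rho_1,\top)\wedge F^\prime(\rho_2,\top)$, a formula of linear size in $|\rho_1|+|\rho_2|$. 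Equivalence then reduces to two containment checks, still within EXPTIME. For CTypes expressions the argument is exactly analogous, replacing Theorem~\ref{theo-paths} by Theorem~\ref{theo-types}.

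There is no real obstacle beyond what has already been done: the heavy work is packaged inside the EXPTIME satisfiability algorithm of Section~\ref{sec:sat} and inside the linear embeddings of Sections~\ref{sec-paths} and \ref{sec:types}. What has to be checked here is only that the composition of a linear-size translation with an EXPTIME decision procedure remains in EXPTIME, which is immediate. The only subtlety worth flagging is that containment must be translated using the \emph{safe} negation $F^\prime$ rather than ordinary $\neg F$, so that fresh nominal propositions marking counting contexts are not inadvertently negated; but this is exactly what Theorems~\ref{theo-paths} and \ref{theo-types} guarantee.
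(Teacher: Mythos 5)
Your proposal is correct and follows exactly the paper's own argument: the corollary is obtained by composing the linear embeddings of Theorems~\ref{theo-paths} and \ref{theo-types} with the EXPTIME satisfiability result for $\mu$TLIN, reducing emptiness, containment and equivalence to (un)satisfiability of linear-size formulas. Your remark about using the safe negation $F^\prime$ for containment matches what those theorems already provide.
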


\section{Conclusions} \label{sec-conclu}
We introduced a modal tree logic with counting and  multi-directional navigation.
We also showed that the logic can linearly characterize counting extensions of regular path queries (XPath) and 
regular tree types (XML schemas).
The logic was also shown to be satisfiable in single exponential time even if the numerical constraints are coded in binary.
In consequence, the logic serves as reasoning framework for XML queries and schemas extended with counting constructs.
These constructs restrict the number of multi-directional regular paths.
Since the logic is closed under negation, we can then decide in EXPTIME  typical  reasoning problems 
such as emptiness, containment, and equivalence of XML queries and schemas.
We are currently working on the implementation of the satisfiability algorithm described
in the present work with the use of  Binary Decision Diagrams (BDD's), as previously 
described in \cite{GenevesLS07,DBLP:conf/aiml/TanabeTH08}.

Proving correctness of programs is a crucial part in the verification of software, such as operating or real-time systems. The implementation of efficient high level program structures are often based on balanced tree structures, such as AVL trees, red-black trees, splay trees, etc.
Reasoning frameworks with in-depth counting constraints, such as the ones described in this work, play a major role in the verification of balanced tree structures, as already described in Habermehl et al. \cite{HabermehlIV10} and Manna et al. \cite{MannaSZ07}.
Therefore, we believe it is possible to study the field of applications 
of the reasoning frameworks developed in this work  in the context of 
the verification of balanced tree structures.
Also in the formal verification side, 
the behavior of reactive systems has been extensively studied by means of 
the model checking problem for the $\mu$-calculus \cite{DBLP:conf/fossacs/FerranteM07,CalvaneseGLV10}.
We also consider the model checking problem for $\mu$TLIN as 
as a further research direction.

\paragraph{\bf Acknowledgments.}
This work benefited from the support of Pierre Genev\`es, Nabil Laya\"ida, Denis Lugiez and Alan Schmitt.

\bibliographystyle{alpha}
\bibliography{ref}

\end{document}